\def\colorful{0}
\newif\ifhyper\IfFileExists{hyperref.sty}{\hypertrue}{\hyperfalse}
\ifhyper\usepackage{hyperref}\fi
\renewcommand{\section}{\@startsection{section}{1}{0pt}{-12pt}{5pt}{\large\bf}}
\renewcommand{\subsection}{\@startsection{subsection}{2}{0pt}{-12pt}{-5pt}{\normalsize\bf}}
\renewcommand{\subsubsection}{\@startsection{subsubsection}{3}{0pt}{-12pt}{-5pt}{\normalsize\bf}}
\def\nnewcolor{1}
\newcommand{\new}[1]{{\color{red} #1}}
\newcommand{\new}[1]{{#1}}
\newtheorem{theorem}{Theorem}
\newtheorem{lemma}[theorem]{Lemma}
\newtheorem{proposition}[theorem]{Proposition}
\newtheorem{corollary}[theorem]{Corollary}
\newtheorem{claim}[theorem]{Claim}
\theoremstyle{definition}
\newcommand{\R}{\mathbb{R}}
\newcommand{\C}{\mathbb{C}}
\newcommand{\Z}{\mathbb{Z}}
\newcommand{\E}{\mathbb{E}}
\newcommand{\poly}{\mathrm{poly}}
\newcommand{\polylog}{\mathrm{polylog}}
\newcommand{\p}{\mathbf{P}}
\newcommand{\q}{\mathbf{Q}}
\newcommand{\h}{\mathbf{H}}
\newcommand{\dtv}{d_{\mathrm TV}}
\newcommand{\var}{\mathrm{Var}}
\newcommand{\ignore}[1]{}
\newcommand{\eps}{\epsilon}
\renewcommand{\eqref}[1]{Eq.~(\ref{#1})}
\newcommand{\eqdef}{\stackrel{{\mathrm {\footnotesize def}}}{=}}
\newenvironment{algorithm}[1][\  ] %
{ \rm
\begin{tabbing}
....\=.....\=.....\=.....\=.....\=  \+ \kill
} %
{\end{tabbing} }
\title{Properly Learning Poisson Binomial Distributions\\ in Almost Polynomial Time}
\author{
Ilias Diakonikolas\thanks{Supported by EPSRC grant EP/L021749/1 and a Marie Curie Career Integration grant.}\\
University of Edinburgh\\
{\tt ilias.d@ed.ac.uk}.\\
\and
Daniel M. Kane\thanks{Some of this work was performed while visiting the University of Edinburgh.}\\
University of California, San Diego\\
{\tt dakane@cs.ucsd.edu}.\\
\and
Alistair Stewart\thanks{Supported by EPSRC grant EP/L021749/1.}\\
University of Edinburgh\\
{\tt stewart.al@gmail.com}.
}
\begin{document}

\maketitle

\thispagestyle{empty}

\begin{abstract}
We give an algorithm for properly learning Poisson binomial distributions.
A Poisson binomial distribution (PBD) of order $n \in \Z_+$
is the discrete probability distribution of the sum of $n$ mutually independent Bernoulli random variables.
Given $\widetilde{O}(1/\eps^2)$ samples from an unknown PBD $\p$, our algorithm runs in time
$(1/\eps)^{O(\log \log (1/\eps))}$, and outputs a hypothesis PBD that is $\eps$-close to $\p$ in total variation distance.
The sample complexity of our algorithm is known to be nearly-optimal, up to logarithmic factors, as established
in previous work~\cite{DDS12stoc}. However, the previously best known running time for properly
learning PBDs~\cite{DDS12stoc, DKS15} was $(1/\eps)^{O(\log(1/\eps))}$, and was essentially obtained by
enumeration over an appropriate  $\eps$-cover. We remark that the running time of this cover-based approach cannot be
improved, as any $\eps$-cover for the space of PBDs has size  $(1/\eps)^{\Omega(\log(1/\eps))}$~\cite{DKS15}.

As one of our main contributions, we provide a novel structural characterization of PBDs,
showing that any PBD $\p$ is $\eps$-close to another PBD $\q$ with $O(\log(1/\eps))$ distinct parameters.
More precisely, we prove that, for all $\eps >0,$ there exists
an explicit collection $\cal{M}$ of $(1/\eps)^{O(\log \log (1/\eps))}$ vectors of multiplicities,
such that for any PBD $\p$ there exists a PBD $\q$ with $O(\log(1/\eps))$ distinct parameters whose multiplicities are given by some element of ${\cal M}$,
such that $\q$ is $\eps$-close to $\p.$  Our proof combines tools from Fourier analysis and algebraic geometry.

Our approach to the proper learning problem is as follows:
Starting with an accurate non-proper hypothesis, we fit a PBD to this hypothesis.
More specifically, we essentially start with the hypothesis computed by the
computationally efficient non-proper learning algorithm in our recent work~\cite{DKS15}.
Our aforementioned structural characterization allows
us to reduce the corresponding fitting problem
to a collection of $(1/\eps)^{O(\log \log(1/\eps))}$ 
systems of low-degree polynomial inequalities.
We show that each such system can be solved in time $(1/\eps)^{O(\log \log(1/\eps))}$,
which yields the overall running time of our algorithm.
\end{abstract}

\thispagestyle{empty}
\setcounter{page}{0}

\newpage

\section{Introduction}  \label{sec:intro}

The Poisson binomial distribution (PBD) is the discrete probability distribution
of a sum of mutually independent Bernoulli random variables.
PBDs comprise one of the most fundamental nonparametric families of discrete distributions.
They have been extensively studied in probability and statistics~\cite{Poisson:37, Chernoff:52,Hoeffding:63,DP09},
and are ubiquitous in various applications (see, e.g., ~\cite{ChenLiu:97} and references therein). Recent years have witnessed a flurry of research activity
on PBDs and generalizations from several perspectives of theoretical computer science,
including learning~\cite{DDS12stoc, DDOST13focs, DKS15, DKT15, DKS15b}, pseudorandomness and derandomization~\cite{GMRZ11, BDS12, De15, GKM15},
property testing~\cite{AD15, CDGR15},
and computational game theory~\cite{DaskalakisP07, DaskalakisP09, DP:cover, DaskalakisP2014, GT14}.

Despite their seeming simplicity, PBDs have surprisingly rich structure, and basic questions about them can be
unexpectedly challenging to answer. We cannot do justice to the probability literature studying the following
question: Under what conditions can we approximate PBDs by simpler distributions?
See Section~1.2 of~\cite{DDS15-journal} for a summary.
In recent years, a number of works in theoretical computer science~\cite{DaskalakisP07, DaskalakisP09, DDS12stoc, DP:cover, DKS15}
have studied, and essentially resolved, the following questions:
Is there a small set of distributions that approximately cover the set of all PBDs? What is the
number of samples required to learn an unknown PBD?

We study the following natural computational question: Given independent samples from an unknown PBD $\p$, can we efficiently
find a hypothesis PBD $\q$ that is close to $\p$, in total variation distance?
That is, we are interested in {\em properly learning} PBDs, a problem
that has resisted recent efforts~\cite{DDS12stoc, DKS15} at designing efficient algorithms.
In this work, we propose a new approach to this problem that
leads to a significantly faster algorithm than was previously known. At a high-level, we establish
an interesting connection of this problem to algebraic geometry and polynomial optimization.
By building on this connection, we provide a new structural characterization
of the space of PBDs, on which our algorithm relies, that we believe is of independent interest.
In the following, we motivate and describe our results in detail, and elaborate on our ideas and techniques.

\medskip

\noindent {\bf Distribution Learning.} We recall the standard definition of learning
 an unknown probability distribution from samples~\cite{KMR+:94short, DL:01}:
Given access to independent samples drawn from an unknown distribution $\p$
in a given family ${\cal C}$, and an error parameter $\eps>0$, a learning algorithm for ${\cal C}$
must output a hypothesis $\h$ such that, with probability at least $9/10$,
the total variation distance 
between $\h$ and $\p$ is at most~$\eps$.
The performance of a learning algorithm is measured by its
{\em sample complexity} (the number of samples drawn from $\p$) and its {\em computational complexity}.

In {\em non-proper} learning (density estimation), the goal is to output
an approximation to the target distribution without any constraints on its representation.
In {\em proper} learning, we require in addition that
 the hypothesis $\h$ is a member of the family ${\cal C}$.
Note that these two notions of learning are essentially equivalent in terms of sample complexity
(given any accurate hypothesis, we can do a brute-force search to find its closest distribution in ${\cal C}$),
but not necessarily equivalent in terms of computational complexity.
A typically more demanding notion of learning is
that of {\em parameter estimation}. The goal here is to identify the parameters
of the unknown model, e.g., the means of the individual Bernoulli components for the case of PBDs,
up to a desired accuracy $\eps$.

\medskip

\noindent {\bf Discussion.}
In many learning situations, it is desirable to compute a proper hypothesis,
i.e., one that belongs to the underlying distribution family ${\cal C}$. A proper hypothesis is typically
preferable due to its interpretability. In the context of distribution learning,
a practitioner may not want to use a density estimate, unless it is proper.
For example, one may want the estimate to have the properties of the underlying family,
either because this reflects some physical understanding of the inference problem,
or because one might only be using the density estimate as the first stage of a more involved procedure.
While parameter estimation may arguably provide a more desirable guarantee
than proper learning in some cases, its sample complexity is typically prohibitively
large.

For the class of PBDs, we show (Proposition~\ref{prop:param}, Appendix~\ref{sec:param}) that parameter estimation
requires {$2^{\Omega(1/\eps)}$} samples, for PBDs with $n = \Omega(1/\eps)$ Bernoulli
components, where $\eps>0$ is the accuracy parameter.
In contrast, the sample complexity of (non-)proper learning is known to be $\widetilde{O}(1/\eps^2)$~\cite{DDS12stoc}.
Hence, proper learning serves as an attractive middle ground between non-proper learning and parameter estimation.
Ideally, one could obtain a proper learner for a given family
whose running time matches that of the best non-proper algorithm.

Recent work by the authors~\cite{DKS15}
has characterized the computational complexity
of non-properly learning PBDs, which was shown to be~$\widetilde{O}(1/\eps^2)$, i.e.,
nearly-linear in the sample complexity of the problem.
Motivated by this progress,  a natural research direction
is to obtain a computationally efficient proper learning algorithm,
i.e., one that runs in time $\poly(1/\eps)$ and outputs a PBD as its hypothesis.
Besides practical applications,
we feel that this is an interesting algorithmic problem, with intriguing connections to algebraic geometry
and polynomial optimization (as we point out in this work). We remark that several natural approaches
fall short of yielding a polynomial--time algorithm. More specifically, the proper learning of PBDs
can be phrased in a number of ways
as a structured non-convex optimization problem, albeit it is
unclear whether any such formulation may lead to a polynomial--time algorithm.

This work is part of a broader agenda of systematically investigating
the computational complexity of proper distribution learning. We believe that this
is a fundamental goal that warrants study for its own sake. The complexity of proper learning
has been extensively investigated in the supervised setting of PAC learning
Boolean functions~\cite{KearnsVazirani:94, Feldman15}, with several algorithmic and computational
intractability results obtained in the past couple of decades.  In sharp contrast, very little is known about the complexity of proper learning
in the unsupervised setting of learning probability distributions.

\subsection{Preliminaries.} \label{sec:prelim}
For $n, m \in \Z_+$ with $m \le n$, we will denote $[n] \eqdef \{0,1,\dots,n\}$ and $[m, n]  \eqdef \{m, m+1 ,\dots,n\}$.
For a distribution $\p$ supported on $[m]$, $m \in \Z_+$, we write $\p(i)$ to denote
the value $\Pr_{X \sim \p}[X=i]$ of the probability mass function (pmf) at point $i$.
The {\em total variation distance} between two distributions
$\p$ and $\q$ supported on a finite domain $A$ is
$\dtv\left(\p, \q \right) \eqdef \max_{S \subseteq A} \left |\p(S)-\q(S) \right|= (1/2) \cdot \| \p -\q  \|_1.$
If $X$ and $Y$ are random variables, their total
variation distance $\dtv(X,Y)$ is defined as the total variation
distance between their distributions.

\smallskip

\noindent {\bf Poisson Binomial Distribution.}
A {\em Poisson binomial distribution of order $n \in \mathbb{Z}_+$} or {\em $n$-PBD}
is the discrete probability distribution of the sum $\sum_{i=1}^n X_i$ of $n$
mutually independent Bernoulli random variables $X_1,\ldots,X_n$. 
An $n$-PBD $\p$ can be represented uniquely as the vector of its $n$ parameters $p_1, \ldots, p_n$,
i.e., as $(p_i)_{i=1}^n$, where we can assume that
$0\le p_1 \le p_2 \le \ldots \le p_n \le 1$. To go from $\p$ to its corresponding vector,
we find a collection $X_1,\ldots,X_n$ of mutually independent Bernoullis
such that $\sum_{i=1}^n X_i$ is distributed according to $\p$
with $\E[X_1]\le \ldots \le \E[X_n]$, and we set $p_i = \E[X_i]$ for all $i$.
\new{An equivalent unique representation of an $n$-PBD
with parameter vector  $(p_i)_{i=1}^n$ is via the vector of its {\em distinct  parameters}
 $p'_1, \ldots, p'_k$, where $1\le k \le n$, and $p'_i \ne p'_j$ for $i \neq j$,
 together with their corresponding integer multiplicities $m_1, \ldots, m_k$. Note that $m_i \ge 1$, $1 \le i \le k$, and $\sum_{i=1}^k m_i = n$.
 This representation will be crucial for the results and techniques of this paper.}

\smallskip

\noindent {\bf Discrete Fourier Transform.} For $x \in \R$ we will denote $e(x) \eqdef  \exp(2 \pi i x)$.
The {\em Discrete Fourier Transform (DFT) modulo $M$} of a function
$F:[n] \rightarrow \C$ is  the function $\widehat{F}:[M-1]\rightarrow \C$ defined as
$\widehat{F}(\xi)=\sum_{j=0}^n e(-\xi j/M) F(j) \;,$
for integers $\xi \in [M-1]$. The DFT modulo $M$, $\widehat{\p}$, of a distribution $\p$ is the DFT modulo $M$ of its probability mass function.
The {\em inverse DFT modulo $M$} onto the range $[m,m+M-1]$ of
$\widehat{F}: [M-1] \rightarrow \C$,
is the function $F: [m, m+M-1] \cap \Z \rightarrow \C$ defined by
$F(j)= \frac{1}{M} \sum_{\xi=0}^{M-1} e(\xi j/M) \widehat{F}(\xi) \;,$
for $j \in [m,m+M-1] \cap \mathbb{Z}$.
The $L_2$ norm of the DFT is defined as $\|\widehat{F}\|_2 = \sqrt{\frac{1}{M} \sum_{\xi=0}^{M-1} |\widehat{F}(\xi)|^2} \;.$

\subsection{Our Results and Comparison to Prior Work.}  \label{ssec:results}

We are ready to formally describe the main contributions of this paper.
As our main algorithmic result, we obtain a near-sample optimal and
almost polynomial-time algorithm for properly learning PBDs:

\begin{theorem} [Proper Learning of PBDs]  \label{thm:proper-learning}
For all $n \in \Z_+$ and $\eps>0$, there is a proper learning algorithm for $n$-PBDs with the following
performance guarantee: Let $\p$ be an unknown $n$-PBD.
The algorithm uses $\widetilde{O}(1/\eps^2)$ samples from $\p$,
runs in time $(1/\eps)^{O(\log \log (1/\eps))}$\footnote{We work
in the standard ``word RAM'' model in which basic arithmetic
operations on $O(\log n)$-bit integers are assumed to take constant time.},
and outputs (a succinct description of) an $n$-PBD $\q$ such that with probability at least $9/10$ it holds that $\dtv(\q, \p) \le \eps.$
\end{theorem}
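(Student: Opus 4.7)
The overall plan is to reduce proper learning to a polynomial feasibility problem guided by the structural theorem announced in the abstract. First, I would run the non-proper learner of~\cite{DKS15} on $\widetilde{O}(1/\eps^2)$ samples to obtain, with high probability and in time $\widetilde{O}(1/\eps^2)$, an explicit hypothesis distribution $\h$ with $\dtv(\h, \p) \le O(\eps)$. Conditioning on this event, I then appeal to the structural characterization: there exists an $n$-PBD $\q^\star$ with at most $k = O(\log(1/\eps))$ distinct parameters whose multiplicity vector lies in an explicit collection $\mathcal{M}$ of size $(1/\eps)^{O(\log\log(1/\eps))}$, and such that $\dtv(\q^\star, \p) \le \eps$. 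By the triangle inequality, $\q^\star$ is also $O(\eps)$-close to $\h$, so it suffices to find \emph{some} PBD with the claimed structure that is $O(\eps)$-close to $\h$. A final tournament step over all candidates produced by the search then selects one that is $\eps$-close to $\p$.

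The search enumerates the $(1/\eps)^{O(\log\log(1/\eps))}$ multiplicity vectors $\mathbf{m} = (m_1,\dots,m_k) \in \mathcal{M}$ and, for each one, sets up a polynomial feasibility system in the $k$ real unknowns $p'_1, \dots, p'_k \in [0,1]$. The natural way to encode ``the PBD with distinct parameters $p'_i$ and multiplicities $m_i$ is $O(\eps)$-close to $\h$'' via polynomial inequalities is in the Fourier domain, since the DFT factorizes as $\widehat{\q}(\xi) = \prod_{i=1}^k \bigl((1-p'_i) + p'_i\, e(-\xi/M)\bigr)^{m_i}$, and standard concentration for PBDs allows one to truncate to a small interval of dominant Fourier modes without losing more than $\eps$ in the $L_2$ norm. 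One then writes $\|\widehat{\q} - \widehat{\h}\|_2 \le O(\eps)$ as a conjunction of polynomial inequalities in the $p'_i$, each of degree at most $O(n)$ but over only $k = O(\log(1/\eps))$ variables. Renegar's decision procedure for the existential theory of the reals then solves each such system in time $(1/\eps)^{O(k)} = (1/\eps)^{O(\log\log(1/\eps))}$, giving the claimed overall runtime.

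The main obstacle is the structural theorem itself: one must produce both a very small number of distinct parameters ($k = O(\log(1/\eps))$, as opposed to the naive $\mathrm{poly}(1/\eps)$) and an explicit, small collection $\mathcal{M}$ of admissible multiplicity patterns. I would attack this by first bucketing the parameters $p_1 \le \dots \le p_n$ by scale: the very small and very large parameters go into $O(\log(1/\eps))$ geometrically-spaced buckets, while the ``bulk'' parameters lie on an appropriately fine additive grid. Within each bucket, a Fourier/Taylor expansion shows that the sub-PBD's DFT is essentially determined by a few low-order power-sum moments of the bucket's parameters, so one may replace those parameters by a single representative tuned to reproduce those moments without substantially perturbing $\widehat{\p}$. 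The quantitative question---how few distinct parameters and which multiplicity patterns suffice to match the relevant moments on each bucket---is where algebraic geometry enters: the fibers of the moment map are low-dimensional varieties, and a dimension count bounds $|\mathcal{M}|$. The delicate part, which I expect to contain the bulk of the technical work, is controlling how moment-matching errors propagate to total variation distance while simultaneously keeping $k$ and $|\mathcal{M}|$ as small as claimed.
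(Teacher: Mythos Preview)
Your high-level plan matches the paper: enumerate the $(1/\eps)^{O(\log\log(1/\eps))}$ multiplicity patterns from the structural theorem, and for each one solve a polynomial feasibility problem in the $k=O(\log(1/\eps))$ distinct parameters via Renegar. The gap is in the degree of the polynomial system and hence in the runtime accounting.

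As you write it, the constraint ``$\widehat{\q}$ is close to $\widehat{\h}$'' uses $\widehat{\q}(\xi)=\prod_i\bigl((1-p'_i)+p'_i\,e(-\xi/M)\bigr)^{m_i}$ directly, which is a polynomial of degree $\sum_i m_i$ in the $p'_i$. Even after the standard reduction to $n=\poly(1/\eps)$, this degree is $\poly(1/\eps)$, not $O(n)$ in any useful sense and certainly not $\polylog(1/\eps)$. Renegar's bound is $\poly(L)\cdot (d\,m)^{O(k)}$, so with $d=\poly(1/\eps)$ and $k=O(\log(1/\eps))$ you get $(1/\eps)^{O(\log(1/\eps))}$, which is exactly the cover-based runtime you are trying to beat. (Separately, your line ``$(1/\eps)^{O(k)}=(1/\eps)^{O(\log\log(1/\eps))}$'' is an arithmetic slip: with $k=\Theta(\log(1/\eps))$ that exponent is $\Theta(\log(1/\eps))$.)

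The paper's fix, which is really the algorithmic crux, is to avoid the raw product formula altogether. One Taylor-expands $\log\widehat{\q}(\xi)$ as a power series in the $p'_i$ (and $1-p'_i$) with coefficients $(e(\pm\xi/M)-1)^\ell$, truncates at $\ell=O(\log(1/\eps))$, and then applies a degree-$O(\log(1/\eps))$ truncated exponential to recover an approximation $q_\xi$ to $\widehat{\q}(\xi)$. The resulting system has $O(\log(1/\eps))$ inequalities each of degree $O(\log(1/\eps))$ in $O(\log(1/\eps))$ variables, whence Renegar runs in $(O(\log(1/\eps)))^{O(\log(1/\eps))}=(1/\eps)^{O(\log\log(1/\eps))}$. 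Making this truncation sound requires the interval structure in the strong form of the structural theorem (parameters are confined to intervals where either $p'_i$ or $1-p'_i$ is bounded away from $1/2$, so the Taylor tail is controlled), and a separate treatment of the small-variance case where a few ``middle'' parameters are handled by an explicit product factor of bounded degree. Your sketch of the structural theorem is in the right spirit (bucket by scale, match low moments per bucket), though the specific algebraic-geometric input the paper uses is Riener's theorem on symmetric polynomial systems rather than a dimension count on moment-map fibers.
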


We now provide a comparison of Theorem~\ref{thm:proper-learning} to previous work.
The problem of learning PBDs was first explicitly considered by Daskalakis {\em et al.}~\cite{DDS12stoc}, who
gave two main results: (i) a non-proper learning algorithm with sample complexity and running time
$\widetilde{O}(1/\eps^3)$, and (ii) a proper learning algorithm with sample complexity
$\widetilde{O}(1/\eps^2)$ and running time $(1/\eps)^{\polylog(1/\eps)}$.
In recent work~\cite{DKS15}, the authors of the current paper obtained a near-optimal sample and time
algorithm to non-properly learn a more general family of discrete distributions (containing PBDs).
For the special case of PBDs, the aforementioned work~\cite{DKS15} yields the following implications: (i) a non-proper learning algorithm with sample
and time complexity $\widetilde{O}(1/\eps^2)$, and (ii) a proper learning algorithm with sample complexity
$\widetilde{O}(1/\eps^2)$ and running time $(1/\eps)^{\Theta(\log(1/\eps))}$.
 Prior to this paper, this was the fastest algorithm for properly learning PBDs.
 Hence, Theorem~\ref{thm:proper-learning} represents a super-polynomial improvement
in the running time, while still using a near-optimal sample size.

In addition to obtaining a significantly more efficient algorithm, the proof of
Theorem~\ref{thm:proper-learning} offers a novel approach to the problem of properly learning PBDs. The proper
algorithms of~\cite{DDS12stoc, DKS15} exploit the cover structure of the space of PBDs, and (essentially)
proceed by running an appropriate tournament procedure over an $\eps$-cover
(see, e.g., Lemma~10 in~\cite{DDS15-journal})\footnote{Note that any $\eps$-cover for the space of $n$-PBDs has size $\Omega(n)$. However, for the task of properly
learning PBDs, by a simple (known) reduction, one can assume without loss of generality that $n = \poly(1/\eps)$. Hence, the tournament-based
algorithm only needs to consider $\eps$-covers over PBDs with $\poly(1/\eps)$ Bernoulli components.}.
This cover-based approach, when applied to an $\eps$-covering set of size $N$, clearly has runtime $\Omega(N)$, and
can be easily implemented in time $O(N^2/\eps^2)$. \cite{DDS12stoc} applies the cover-based approach
to the $\eps$-cover construction of~\cite{DP:cover}, which has size $(1/\eps)^{O(\log^2(1/\eps))}$,
while ~\cite{DKS15} proves and uses a new cover construction of size $(1/\eps)^{O(\log(1/\eps))}$.
Observe that if there existed an explicit $\eps$-cover of size $\poly(1/\eps)$, the aforementioned cover-based approach
would immediately yield a $\poly(1/\eps)$ time proper learning algorithm. Perhaps surprisingly, it was shown in
~\cite{DKS15} that {\em any} $\eps$-cover for $n$-PBDs with $n = \Omega(\log(1/\eps))$ Bernoulli coordinates
has size $(1/\eps)^{\Omega(\log(1/\eps))}$. In conclusion, the cover-based approach for properly learning PBDs
inherently leads to runtime of $(1/\eps)^{\Omega(\log(1/\eps))}$. 

In this work, we circumvent the $(1/\eps)^{\Omega(\log(1/\eps))}$ cover size lower bound by establishing a new structural characterization
of the space of PBDs. Very roughly speaking, our structural result allows us to reduce the proper learning problem
to the case that the underlying PBD has $O(\log(1/\eps))$ {\em distinct} parameters.
Indeed, as a simple corollary of our main structural result (Theorem~\ref{thm:complex-distinct-param} in Section~\ref{sec:count}),
we obtain the following:

\begin{theorem} [A ``Few'' Distinct Parameters Suffice] \label{thm:simple-distinct-param}
For all $n \in \Z_+$ and $\eps>0$ the following holds:
For any $n$-PBD $\p$, there exists an $n$-PBD $\q$
with $\dtv(\p,\q) \leq \eps$ such that $\q$ has $O(\log(1/\eps))$ distinct parameters.
\end{theorem}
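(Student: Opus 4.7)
The statement is presented in the excerpt as a corollary of the paper's main structural result, Theorem~\ref{thm:complex-distinct-param}, which additionally pins down the multiplicities of $\q$ to lie in an explicit collection of size $(1/\eps)^{O(\log\log(1/\eps))}$. So the direct proof of Theorem~\ref{thm:simple-distinct-param} is a one-line application of Theorem~\ref{thm:complex-distinct-param}: the existence of a multiplicity vector with $O(\log(1/\eps))$ entries immediately yields a PBD with $O(\log(1/\eps))$ distinct parameters. My real planning task is therefore to sketch how I would prove the underlying structural fact.

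The plan is to work in the Fourier domain. By standard Plancherel-type control of total variation by the $L^2$ norm of the DFT modulo $M = \Theta(n)$, it suffices to find an $n$-PBD $\q$ with $O(\log(1/\eps))$ distinct parameters whose DFT is close to $\widehat{\p}$. For an $n$-PBD with parameter vector $(p_i)$, the product form $\widehat{\p}(\xi) = \prod_{i=1}^n \bigl(1 + p_i (e(\xi/M) - 1)\bigr)$ gives a log-DFT $\log \widehat{\p}(\xi) = \sum_i \log(1 + p_i z_\xi)$ where $z_\xi = e(\xi/M) - 1$. For small $|z_\xi|$ this has Taylor expansion $-\sum_{k \geq 1} \frac{(-z_\xi)^k}{k} P_k$ with $P_k = \sum_i p_i^k$, and the Gaussian-like decay of $|\widehat{\p}(\xi)|$ driven by $\sum_i p_i(1 - p_i)$ renders the high-frequency regime essentially negligible. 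So matching the first $K = O(\log(1/\eps))$ power sums $P_k$ between $\p$ and $\q$ should control $\|\widehat{\p} - \widehat{\q}\|_\infty$ up to the allowed $\eps$.

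To construct $\q$, I would first snap parameters within $O(\eps/n)$ of $0$ or $1$ to exactly $0$ or $1$ at negligible cost. The rest of $[0,1]$ I would partition into $O(\log(1/\eps))$ dyadic scales on either side of $1/2$; within each block all parameters have comparable magnitude, and a Gauss-quadrature-style replacement substitutes the empirical measure of parameters in that block by a discrete measure on a constant number of atoms that matches the first several block-level power sums. Assembled across blocks, this gives a PBD with $O(\log(1/\eps))$ distinct parameters and matching global low-order power sums, which in view of the Fourier reduction above yields $\dtv(\p,\q) \leq \eps$.

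The main obstacle I expect is the integrality constraint on multiplicities: $\q$ must be an $n$-PBD, so the quadrature weights within each block must be nonnegative integers summing to the block's population, whereas the natural quadrature weights are real. Rounding perturbs power sums, and this perturbation must be tracked carefully against the Fourier error budget; when pushed too hard it forces the problem into the language of real solvability of systems of polynomial inequalities, which is exactly the algebraic-geometry component the paper invokes and which produces the explicit multiplicity collection $\mathcal{M}$ of Theorem~\ref{thm:complex-distinct-param}. Coordinating the block-by-block quadrature error with the Fourier decay into a single uniform bound on $\widehat{\p} - \widehat{\q}$ over the effective support of $\widehat{\p}$ is the technical heart of the argument.
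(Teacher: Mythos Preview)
Your high-level scaffolding matches the paper almost exactly: the Fourier reduction, the Taylor expansion of $\log\widehat{\p}(\xi)$ as a series in the power sums $\sum_i p_i^k$, and the partition of $[0,1]$ into dyadic-type blocks around $0$ and $1$ are precisely the contents of Lemmas~\ref{closePBDLem} and~\ref{momentMatchLem} and the interval structure of Theorem~\ref{thm:complex-distinct-param}. Your statement that Theorem~\ref{thm:simple-distinct-param} is then a one-line corollary (sum the per-interval distinct-parameter bounds, which form a geometric series) is also correct.

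The substantive divergence is in how you reduce to few distinct parameters \emph{within} a block. You propose Gauss-type quadrature, which yields real weights, and then you flag the rounding to integer multiplicities as the central obstacle. The paper never faces this obstacle because it uses a different tool: Riener's theorem (Theorem~\ref{thm:reiner}, Corollary~\ref{cor:distinct-equations-in-ab}). That result says that a system of symmetric polynomial equations of degree at most $d$ in $m$ real variables, if feasible over $[a,b]^m$, has a feasible point in $[a,b]^m$ with at most $d$ distinct coordinate values. Applied to the system ``$\sum_{j=1}^m q_j^\ell = \sum_{j=1}^m p_j^\ell$ for $1\le \ell \le d$, $q_j\in[a,b]$'' (which is trivially feasible at $q_j=p_j$), it directly outputs $q_1,\ldots,q_m\in[a,b]$ with at most $d$ distinct values and \emph{exactly} the same first $d$ power sums. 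Integer multiplicities come for free, because you are still assigning one value to each of the $m$ coordinates; there is nothing to round.

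This also means your reading of where the algebraic geometry enters is off. You say it ``produces the explicit multiplicity collection $\mathcal{M}$,'' but $\mathcal{M}$ and its $(1/\eps)^{O(\log\log(1/\eps))}$ size bound come from a pure counting argument (Lemma~\ref{NumberofMultiplicitiesLem}): Theorem~\ref{thm:complex-distinct-param} bounds, for each block, both the total number of parameters and the number of distinct ones, and one simply multiplies the possibilities across blocks. The algebraic geometry lives inside the proof of Theorem~\ref{thm:complex-distinct-param} and is exactly what delivers the per-block distinct-parameter bound; the multiplicity enumeration is downstream bookkeeping.
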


We note that in subsequent work~\cite{DKS15b} the authors generalize the above theorem
to Poisson multinomial distributions.

\medskip

\noindent {\bf Remark.} We remark that Theorem~\ref{thm:simple-distinct-param} is quantitatively tight, 
i.e., $O(\log(1/\eps))$ distinct parameters are in general necessary to $\eps$-approximate PBDs.
This follows directly from the explicit cover lower bound construction of~\cite{DKS15}.

\medskip

We view Theorem~\ref{thm:simple-distinct-param} as a natural structural result for PBDs.
Alas, its statement does not quite suffice for our algorithmic application.
While Theorem~\ref{thm:simple-distinct-param} guarantees that  $O(\log(1/\eps))$ distinct parameters are enough
to consider for an $\eps$-approximation,
it gives no information on the multiplicities these parameters may have.
In particular, the upper bound on the number of different combinations of multiplicities
one can derive from it is \new{$(1/\eps)^{O(\log(1/\eps))}$}, which
is not strong enough for our purposes.
The following stronger structural result (see Theorem~\ref{thm:complex-distinct-param}
and Lemma~\ref{NumberofMultiplicitiesLem} for detailed statements) is critical for our improved proper algorithm:

\begin{theorem}  [A ``Few'' Multiplicities and Distinct Parameters Suffice] \label{thm:complex-distinct-param-informal}
For all $n \in \Z_+$ and $\eps>0$ the following holds:
For any $\widetilde{\sigma}>0,$ there exists an explicit collection $\cal{M}$ of $(1/\eps)^{O(\log \log (1/\eps))}$ vectors of multiplicities computable in $\poly(|\cal{M}|)$ time,
so that for any $n$-PBD $\p$ with variance $\Theta(\widetilde{\sigma}^2)$
there exists a PBD $\q$ with $O(\log(1/\eps))$ distinct parameters whose multiplicities are given by some element of ${\cal M}$,
such that  $\dtv(\p,\q) \leq \eps$.
\end{theorem}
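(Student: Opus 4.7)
The plan is to combine the simpler structural statement Theorem~\ref{thm:simple-distinct-param} with a Fourier-based moment-matching argument. By applying Theorem~\ref{thm:simple-distinct-param}, I may assume $\p$ has $k = O(\log(1/\eps))$ distinct parameters $p_1,\dots,p_k$ with multiplicities $m_1,\dots,m_k$. The remaining task is to build an explicit list $\mathcal{M}$ of at most $(1/\eps)^{O(\log\log(1/\eps))}$ integer tuples so that some $(n_1,\dots,n_k)\in\mathcal{M}$ admits parameters $q_1,\dots,q_k\in[0,1]$ yielding a PBD $\q$ with $\dtv(\p,\q)\le\eps$.

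To quantify closeness I would work with the DFT. Writing $\widehat{\p}(\xi)=\prod_j(1-p_j+p_j e(\xi/M))^{m_j}$ and taking logarithms,
\[
\log\widehat{\p}(\xi) \;=\; -\sum_{r\ge 1}\frac{(1-e(\xi/M))^r}{r}\,S_r, \qquad S_r \eqdef \sum_{j} m_j p_j^r.
\]
Standard Fourier-concentration bounds for PBDs of variance $\Theta(\widetilde{\sigma}^2)$ restrict the relevant frequencies to $|\xi|/M = O(\widetilde{\sigma}^{-1}\sqrt{\log(1/\eps)})$, and at such $\xi$ the tail $r>R=O(\log(1/\eps))$ of the expansion contributes $o(\eps)$. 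It therefore suffices to produce $\q$ whose first $R$ power sums $\sum_j n_j q_j^r$ match $S_1,\dots,S_R$ to within small additive error.

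I would then define the candidate multiplicity set as a multiplicative grid: fix a small constant $c>0$ and set $T=\{0\}\cup\{\lfloor(1+c)^i\rfloor : 0\le i\le\lceil\log_{1+c}n\rceil\}$, so $|T|=O(\log n)=O(\log(1/\eps))$ after the standard reduction to $n=\poly(1/\eps)$. Take $\mathcal{M}=T^k$; then $|\mathcal{M}|=(1/\eps)^{O(\log\log(1/\eps))}$, and $\mathcal{M}$ is enumerable in time polynomial in its size. Round each original $m_j$ to the nearest $n_j\in T$, yielding $n_j=(1+O(c))m_j$. It remains to adjust the parameters $p_j\mapsto q_j$ so that the rounded-multiplicity PBD $\q$ matches $\p$ in its first $R$ power sums; by the Fourier step, this would give $\dtv(\p,\q)\le\eps$.

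The main obstacle, and presumably where the algebraic-geometry machinery alluded to in the abstract enters, is solving the perturbed moment system $\sum_j n_j q_j^r \approx S_r$ for $r=1,\dots,R$. With $k$ unknowns and $R=\Theta(k)$ approximate constraints, only an approximate solution can be sought. I would argue for its existence via a quantitative implicit-function/Newton argument seeded at $q_j=p_j$: the Jacobian of $(q_1,\dots,q_k)\mapsto(S_1,\dots,S_k)$ is a weighted Vandermonde-type matrix, invertible when the $p_j$ are well separated and the $n_j$ are nonzero. Its condition number degrades when distinct parameters cluster or approach $\{0,1\}$, so the delicate part is a case split on the magnitudes of the $p_j$ (isolating those close to $0$, close to $1$, and bounded away from both), combined with an elimination/resultant argument in each regime to certify that an $O(c)$ relative perturbation of the multiplicities still leaves the system solvable within the radius of convergence of the inversion. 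The constant $c$ must be chosen small enough for this inversion to succeed, yet bounded away from $0$ so that $|T|=O(\log(1/\eps))$ and hence $|\mathcal{M}|$ meets the target bound.
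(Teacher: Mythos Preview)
Your Fourier/moment-matching framework and the counting $|T|^k=(1/\eps)^{O(\log\log(1/\eps))}$ are fine, but the crucial step~4 --- absorbing a constant-factor rounding of the multiplicities by perturbing the parameters --- does not go through, and this is exactly the place where the paper's argument differs from yours.

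The obstruction is not merely Vandermonde conditioning; it is the \emph{size} of the perturbation. Consider a PBD that, after Theorem~\ref{thm:simple-distinct-param}, has one parameter $p_1=1/2$ of multiplicity $m_1=\Theta(1/\eps^3)$ together with $\Theta(\log(1/\eps))$ further distinct parameters each of multiplicity~$1$ (such PBDs exist by the cover lower bound that makes Theorem~\ref{thm:simple-distinct-param} tight). Rounding $m_1$ to the nearest point of your grid introduces an error $|n_1-m_1|=\Theta(c\,m_1)=\Theta(c/\eps^3)$. The low-multiplicity parameters can shift the mean by at most $O(\log(1/\eps))$ in total, so the compensation must come from $q_1$; matching the mean forces $q_1=p_1 m_1/n_1$, after which the variance of the Binomial$(n_1,q_1)$ block differs from that of Binomial$(m_1,1/2)$ by a $\Theta(c)$ \emph{relative} factor, yielding $\dtv=\Theta(c)$. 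Thus $c$ must be $O(\eps)$ for this coordinate, whence $|T|=\Omega(\log n/\eps)$ and $|T|^k=(1/\eps)^{\Omega(\log(1/\eps))}$, destroying the bound you are after. No amount of Newton iteration or resultant manipulation fixes this: with a constant $c$ you simply cannot land in the right moment fiber while keeping $q_j\in[0,1]$, and with $c=\poly(\eps)$ the grid is too fine.

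The paper's proof avoids rounding multiplicities altogether. It first proves the refined structural Theorem~\ref{thm:complex-distinct-param}: partition $[0,1]$ into doubly-geometrically shrinking intervals $I_i,J_i$ with endpoints $B_i=R^{2^i}$; within each interval apply moment matching together with Riener's theorem (Theorem~\ref{thm:reiner}) to force at most $O(\log(1/\eps)/\log(1/B_i))$ distinct parameters there, while the variance bound limits the \emph{total} number of parameters in that interval to $O(\var[\p]/B_i)$. The point is a tradeoff you do not exploit: intervals close to $1/2$ carry large multiplicities but only $O(1)$ distinct parameters, whereas intervals near $0$ or $1$ carry many distinct parameters but each multiplicity is small. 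Enumerating exact integer multiplicities interval-by-interval and taking the product (Lemma~\ref{NumberofMultiplicitiesLem}) then gives $\prod_i (\var[\p]/B_i)^{O(\log(1/\eps)/\log(1/B_i))}=(1/\eps)^{O(\log\log(1/\eps))}$. The algebraic-geometry input (Riener's theorem) is used to \emph{create} the few-distinct-parameters structure inside each interval, not to solve a perturbed moment system after rounding.
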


Now suppose we would like to properly learn an unknown PBD with $O(\log(1/\eps))$ {\em distinct} parameters and known multiplicities
for each parameter. Even for this very restricted subset of PBDs, the construction of~\cite{DKS15} implies
a cover lower bound of $(1/\eps)^{\Omega(\log(1/\eps))}$. To handle such PBDs,
we combine ingredients from Fourier analysis and algebraic geometry \new{with careful Taylor series approximations,}
to construct an appropriate system of low-degree polynomial inequalities
whose solution approximately recovers the unknown distinct parameters.


In the following subsection, we provide a detailed intuitive explanation of our techniques.

\subsection{Techniques.} \label{sec:techniques}

The starting point of this work lies in the non-proper learning algorithm from
our recent work~\cite{DKS15}. Roughly speaking, our new proper algorithm can be viewed as a two-step
process: We first compute an accurate non-proper hypothesis $\h$ using the algorithm in
~\cite{DKS15}, and we then post-process $\h$ to find  a PBD $\q$ that is close to $\h$.
We note that the non-proper hypothesis $\h$ output by~\cite{DKS15}
is represented succinctly via its Discrete Fourier Transform; this property
is crucial for the computational complexity of our proper algorithm.
(We note that the description of our proper algorithm and its analysis, presented in Section~\ref{sec:alg}, are entirely self-contained.
The above description is for the sake of the intuition.)

We now proceed to explain the connection in detail.
The crucial fact, established in~\cite{DKS15} for a more general setting, is that
the Fourier transform of a PBD has small effective support
(and in particular the effective support of the Fourier transform has size roughly inverse
to the effective support of the PBD itself). Hence,
in order to learn an unknown PBD $\p$, it suffices to find another PBD, $\q$,
with similar mean and standard deviation to $\p$,
so that the Fourier transform of $\q$
approximates the Fourier transform of $\p$
on this small region.
(The non-proper algorithm of ~\cite{DKS15} for PBDs
essentially outputs the empirical DFT of $\p$ over its effective support.)

Note that the Fourier transform of a PBD is the product of the Fourier transforms of its individual component variables.
By Taylor expanding the logarithm of the Fourier transform, we can write the log Fourier transform of a PBD
as a Taylor series whose coefficients are related to the moments of the parameters of $\p$ (see Equation (\ref{logTaylorEqn})).
We show that for our purposes it suffices to find a PBD $\q$ so that the first $O(\log(1/\epsilon))$ moments
of its parameters approximate the corresponding moments for $\p$.
Unfortunately, we do not actually know the moments for $\p$,
but since we can easily approximate the Fourier transform of $\p$ from samples,
we can derive conditions that are sufficient for the moments of $\q$ to satisfy.
This step essentially gives us a system of polynomial inequalities in the moments
of the parameters of $\q$ that we need to satisfy.

A standard way to solve such a polynomial system is by appealing to Renegar's algorithm~\cite{Renegar92-sicomp, Renegar92-jsc}, which
allows us to solve a system of degree-$d$ polynomial inequalities in $k$ real variables in time roughly $d^k$.
In our case, the degree $d$ will be at most poly-logarithmic in $1/\eps$, but the number of variables
$k$ corresponds to the number of parameters of $\q$, which is $k = \poly(1/\eps)$.
Hence, this approach is insufficient to obtain a faster proper algorithm.

To circumvent this obstacle, we show that it actually suffices to consider only PBDs with $O(\log(1/\epsilon))$ many distinct parameters
(Theorem~\ref{thm:simple-distinct-param}).
To prove this statement, we use a recent result from algebraic geometry due to Riener~\cite{R:11} (Theorem~\ref{thm:reiner}),
that can be used to relate the number of distinct parameters of a solution of a polynomial system
to the degree of the polynomials involved.
Note that the problem of matching $O(\log(1/\epsilon))$ moments
can be expressed as a system of polynomial equations,
where each polynomial has degree $O(\log(1/\epsilon))$. We can thus
find a PBD $\q$, which has the same first $O(\log(1/\epsilon))$ moments
as $\p$, with $O(\log(1/\epsilon))$ distinct parameters such that $\dtv (\q, \p) \le \eps.$
For PBDs with $O(\log(1/\eps))$ distinct parameters and {\em known} multiplicities for these parameters, we can reduce the runtime
of solving the polynomial system to $O(\log(1/\epsilon))^{O(\log(1/\epsilon))}=(1/\epsilon)^{O(\log\log(1/\epsilon))}.$

Unfortunately, the above structural result is not strong enough,
as in order to set up an appropriate system of polynomial inequalities for the parameters of $\q$,
we must first guess the multiplicities to which the distinct parameters appear.
A simple counting argument shows that there are roughly $k^{\log(1/\epsilon)}$ ways to choose these multiplicities.
To overcome this second obstacle, we need the following refinement of our structural result on distinct parameters:
We divide the parameters of $\q$ into categories based on how close they are to $0$ or $1$.
We show that there is a tradeoff between the number of parameters in a given category
and the number of distinct parameters in that category (see Theorem \ref{thm:complex-distinct-param}).
With this more refined result in hand, we show that there are only $(1/\epsilon)^{O(\log\log(1/\epsilon))}$
many possible collections of multiplicities that need to be considered (see Lemma \ref{NumberofMultiplicitiesLem}]).

Given this stronger structural characterization, our proper learning algorithm is fairly simple.
We enumerate over the set of possible collections of multiplicities as described above.
For each such collection, we set up a system of polynomial equations in the distinct parameters of $\q$,
so that solutions to the system will correspond to PBDs whose distinct parameters have the specified multiplicities
which are also $\eps$-close to $\p$. For each system, we attempt to solve it using Renegar's algorithm.
Since there exists at least one PBD $\q$ close to $\p$ with
such a set of multiplicities, we are guaranteed to find a solution,
which in turn must describe a PBD close to $\p$.

One technical issue that arises in the above program occurs when $\var[\p] \ll \log(1/\epsilon)$.
In this case, the effective support of the Fourier transform of $\p$
cannot be restricted to a small subset. \new{This causes problems with the convergence of
our Taylor expansion of the log Fourier transform for parameters near $1/2$.}
However, then only $O(\log(1/\epsilon))$ parameters are not close to $0$ and $1$,
and we can deal with such parameters separately.

\subsection{Related Work.} \label{sec:related}
Distribution learning is a classical problem
in statistics with a rich history and extensive literature (see e.g.,~\cite{BBBB:72, DG85, Silverman:86,Scott:92,DL:01}).
During the past couple of decades, a body of work in theoretical computer science has been studying these questions from
a computational complexity perspective; see e.g.,
\cite{KMR+:94short,FreundMansour:99short,AroraKannan:01, CGG:02, VempalaWang:02,FOS:05focsshort, BelkinSinha:10, KMV:10,
MoitraValiant:10, DDS12stoc, DDOST13focs, CDSS14, CDSS14b, ADLS15}.

We remark that the majority of the literature has focused either on non-proper learning (density estimation) or on
parameter estimation.
Regarding proper learning, a number of recent works in the statistics community
have given proper learners for structured distribution
families, by using a maximum likelihood approach.
See e.g.,~\cite{DumbgenRufibach:09, GW09sc, Walther09, DossW13, ChenSam13, KimSam14, BalDoss14}
for the case of continuous log-concave densities.
Alas, the computational complexity of these approaches has not been
analyzed. Two recent works~\cite{ADK15, CDGR15} yield computationally efficient proper learners
for discrete log-concave distributions, by using an appropriate convex formulation.
Proper learning has also been recently studied in the context of mixture models~\cite{FOS:05focsshort, DK14, SOAJ14, LiS15a}.
Here, the underlying optimization problems are non-convex, and
efficient algorithms are known only when the number of mixture components is small.

\subsection{Organization.} In Section~\ref{sec:count}, we prove our main structural result,
and in Section~\ref{sec:alg}, we describe our algorithm and prove its correctness. In Section~\ref{sec:concl},
we conclude with some directions for future research.

\section{Main Structural Result} \label{sec:count}
In this section, we prove our main structural results thereby establishing Theorems~\ref{thm:simple-distinct-param}
and~\ref{thm:complex-distinct-param-informal}.
Our proofs rely on an analysis of the Fourier transform
of PBDs combined with recent results from algebraic geometry on the solution structure
of systems of symmetric polynomial equations.
We show the following:

\begin{theorem} \label{thm:complex-distinct-param}
Given any $n$-PBD $\p$ with $\var[\p] = \poly(1/\eps)$,
there is an $n$-PBD $\q$ with $\dtv(\p,\q) \leq \eps$ such that
$\E[\q]=\E[\p]$ and $\var[\p] - \eps^3 \leq \var[\q] \leq \var[\p]$,
satisfying the following properties:

Let $R \eqdef \min \{1/4, \sqrt{\ln (1/\eps)/\var[\p]} \}$.
Let $B_i \eqdef R^{2^i}$,  for the integers $0 \leq i \leq \ell$,
where $\ell=O(\log \log(1/\eps))$ is selected such that $B_{\ell} = \poly(\eps)$.
Consider the partition $\mathcal{I} = \{I_i, J_i \}_{i=0}^{\ell+1}$ of $[0, 1]$ into the following set of intervals:
$I_0 = [B_0, 1/2]$, $I_{i+1} = [B_{i+1},B_i)$, $0 \le i \le {\ell-1}$, $I_{\ell+1} = (0, B_{\ell})$;
and $J_0 = (1/2, 1-B_0]$,  $J_{i+1} = (1- B_{i}, 1-B_{i+1}]$,
$0 \le i \le {\ell-1}$, $J_{\ell+1} = (1-B_{\ell}, 1]$.
Then we have the following:
\begin{enumerate}
\item[(i)] For each $0 \leq i \leq \ell$, each of the intervals $I_{i}$ and $J_{i}$
contains at most $O(\log(1/\eps)/\log (1/B_i))$ distinct parameters of $\q$.

\item[(ii) ]$\q$ has at most one parameter in each of the intervals $I_{\ell+1}$ and $J_{\ell+1} \setminus \{1\}$.

\item[(iii)] The number of parameters of $\q$ equal to $1$ is within an additive $\poly(1/\eps)$ of $\E[\p]$.

\item[(iv)] For each $0 \leq i \leq \ell$, each of the intervals $I_{i}$ and $J_{i}$
contains at most $2\var[\p]/B_{i}$ parameters of $\q$.
\end{enumerate}
\end{theorem}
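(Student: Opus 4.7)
The plan is to leverage a Fourier-analytic framework together with Riener's theorem on symmetric polynomial systems, constructing $\q$ by treating each interval in the partition $\mathcal{I}$ as an independent moment-matching subproblem. First I would pick $M$ slightly larger than the effective support of $\p$ and expand the log characteristic function as
\begin{equation*}
\log \widehat{\p}(\xi) \;=\; \sum_{j=1}^n \log\bigl(1 + p_j(e(\xi/M)-1)\bigr) \;=\; \sum_{k\geq 1} \frac{(-1)^{k+1}}{k}\bigl(e(\xi/M)-1\bigr)^k T_k,
\end{equation*}
where $T_k = \sum_{j=1}^n p_j^k$ is the $k$-th power sum of the parameters. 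A Gaussian-type tail bound shows $|\widehat{\p}(\xi)|$ is $\poly(\eps)$-small outside the effective support $|\xi|/M \lesssim \sqrt{\log(1/\eps)}/\sigma$, where $\sigma = \sqrt{\var[\p]}$, so by Plancherel it suffices to match $\widehat{\q}(\xi)$ and $\widehat{\p}(\xi)$ on this region, on which $|e(\xi/M)-1| = O(R)$ with $R$ as defined in the theorem.

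Next I would partition the parameters of $\p$ according to $\{I_i, J_i\}$ and treat each interval independently. Writing $T_k^{(i)} = \sum_{p_j \in I_i} p_j^k$, the contribution of $I_i$ to $\log\widehat{\p}(\xi)$ is the subseries $\sum_k \tfrac{(-1)^{k+1}}{k}(e(\xi/M)-1)^k T_k^{(i)}$. For $i \geq 1$ every parameter in $I_i$ satisfies $p_j \leq B_{i-1}=\sqrt{B_i}$, and the count $m_i$ obeys $m_i \leq 2\var[\p]/B_i$, which holds for $\p$ directly because $p(1-p) \geq B_i/2$ on $I_i$. Truncating the subseries at order $K_i = \Theta(\log(1/\eps)/\log(1/B_i))$ leaves a tail dominated geometrically by $(O(R) B_{i-1})^{K_i} m_i$; plugging in the count bound and $B_{i-1}=\sqrt{B_i}$ shows this is $\eps/(\ell+2)$, so summing over the $O(\log\log(1/\eps))$ intervals yields total log-Fourier error $O(\eps)$ on the effective support. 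This transfers to $O(\eps)$ TV distance via exponentiation and Plancherel, so it suffices to match the first $K_i$ power sums on each $I_i$.

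Then, for each $I_i$ I apply Riener's theorem~\cite{R:11} to the symmetric semi-algebraic system
\begin{equation*}
\Bigl\{ \sum_{j=1}^{m_i} q_j^k = T_k^{(i)} \;:\; 1 \leq k \leq K_i \Bigr\} \;\cup\; \bigl\{ q_j \in I_i \;:\; 1 \leq j \leq m_i \bigr\}.
\end{equation*}
These are symmetric polynomial constraints of degree at most $K_i$ in $m_i$ variables, and the system is non-empty (witnessed by the original $\p$-parameters in $I_i$), so Riener's theorem produces a solution in which the $q_j$ take at most $K_i = O(\log(1/\eps)/\log(1/B_i))$ distinct values, establishing (i); the box inequalities guarantee each new parameter stays in $I_i$, so (iv) is inherited. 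Because $T_1^{(i)}$ is among the matched moments, the $I_i$-contribution to $\E[\q]$ equals that of $\p$ exactly. The $J_i$ intervals are handled identically using the symmetry $p \leftrightarrow 1-p$.

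Finally, for the tails $I_{\ell+1}$ and $J_{\ell+1}\setminus\{1\}$, parameters here lie within $\poly(\eps)$ of $0$ or $1$, so they can be absorbed (with appropriate integer shifts in the counts of $0$'s and $1$'s) into at most one representative in each tail interval at a TV cost of $\poly(\eps)$, yielding (ii). A small integer adjustment of the number of $1$'s then enforces $\E[\q] = \E[\p]$ exactly and delivers (iii), while a single downward shift of one parameter by at most $\eps^3$ realigns $\var[\q]$ with $\var[\p]$. The main obstacle I anticipate is the high-variance regime where $R$ is pinned at $1/4$ and the log-Fourier Taylor series is only marginally convergent for parameters near $p = 1/2$, which likely requires matching additional moments on $I_0$ and handling its contribution with extra care beyond the uniform Riener-based argument; a secondary subtlety is simultaneously balancing the truncation error budget across all $O(\log\log(1/\eps))$ intervals while preserving the exact first moment in each.
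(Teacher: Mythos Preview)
Your approach is essentially the paper's: per-interval moment matching of the log-DFT power-sum expansion, followed by Riener's theorem to collapse each interval to few distinct values, with a separate merge step for the tail intervals $I_{\ell+1}$, $J_{\ell+1}$.

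Two small corrections are worth making. First, your flagged obstacle has the regimes swapped: $R$ is pinned at $1/4$ precisely in the \emph{low}-variance regime $\var[\p] \ll \log(1/\eps)$, and the paper dispatches this case trivially by observing that then every parameter in $I_0=[1/4,1/2]$ contributes at least $1/8$ to the variance, so there are at most $O(\log(1/\eps))$ such parameters to begin with and no reduction is needed. In the genuinely high-variance case $R = \sqrt{\ln(1/\eps)/\var[\p]} \ll 1$, your truncation bound on $I_0$ already works because $|e(\xi/M)-1|\cdot p_j = O(R)\cdot(1/2) = O(B_0)$. Second, your final ``adjustments'' are unnecessary: the paper's tail merge replaces a pair $p,p'$ by $0$ and $p+p'$, which preserves the mean exactly and only decreases the variance (by at most $\eps^3$ in total), while matching $T_1^{(i)}$ and $T_2^{(i)}$ on every other interval keeps both mean and variance exactly fixed there; so $\E[\q]=\E[\p]$ and $\var[\p]-\eps^3 \le \var[\q]\le\var[\p]$ fall out without any post-hoc shift.
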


Theorem~\ref{thm:complex-distinct-param} implies
that one needs to only consider $(1/\eps)^{O(\log \log (1/\eps))}$ different combinations of
multiplicities:

\begin{lemma}\label{NumberofMultiplicitiesLem}
For every $\p$ as in Theorem \ref{thm:complex-distinct-param},
there exists an explicit set $\mathcal{M}$ of multisets of triples $(m_i,a_i,b_i)_{1\leq i \leq k}$ so that
\begin{enumerate}
\item[(i)] For each element of $\mathcal{M}$ and each $i$, $[a_i,b_i]$ is either one of the intervals $I_i$ or $J_i$ as in Theorem \ref{thm:complex-distinct-param} or $[0,0]$ or $[1,1]$.
\item[(ii)] For each element of $\mathcal{M}$, $k=O(\log(1/\epsilon))$.
\item[(iii)] There exist an element of $\mathcal{M}$ and a PBD $\q$ as in the statement of Theorem \ref{thm:complex-distinct-param} with $\dtv(\p,\q)<\epsilon^2$ so that $\q$ has a parameter of multiplicity $m_i$ between $a_i$ and $b_i$ for each $1\leq i \leq k$ and no other parameters.
\item [(iv)] $\mathcal{M}$ has size $\left(\frac{1}{\epsilon}\right)^{O(\log\log(1/\epsilon))}$ and can be enumerated in $\poly(|\mathcal{M}|)$ time.
\end{enumerate}
\end{lemma}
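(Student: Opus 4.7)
The plan is to read the intervals and multiplicity bounds off Theorem~\ref{thm:complex-distinct-param} (applied with accuracy $\epsilon^2$ in place of $\epsilon$; this only inflates the hidden constants) and then define $\mathcal{M}$ as the set of \emph{all} multisets of triples $(m_i, a_i, b_i)$ compatible with those ceilings. Concretely, for each of the $2(\ell+1) = O(\log\log(1/\epsilon))$ intervals $I_i, J_i$ with $0 \leq i \leq \ell$, a member of $\mathcal{M}$ lists a multiset of positive integer multiplicities of size at most $D_i := C\log(1/\epsilon)/\log(1/B_i)$ summing to at most $N_i := 2\var[\p]/B_i$, reflecting properties (i) and (iv) of the theorem. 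The tail intervals $I_{\ell+1}$ and $J_{\ell+1}\setminus\{1\}$ each contribute at most one triple with multiplicity in $[0,n]$, the multiplicity at $1$ ranges over an explicit $\poly(1/\epsilon)$-length window (property (iii) of the theorem), and the multiplicity at $0$ is an integer in $[0,n]$. Standard reductions for PBD learning let us take $n = \poly(1/\epsilon)$. Property (iii) of the lemma is then immediate, since the $\q$ produced by the theorem corresponds to one element of $\mathcal{M}$; property (ii) follows from $\sum_{i=0}^{\ell} D_i = O(\log(1/\epsilon)/\log(1/R)) = O(\log(1/\epsilon))$ by the geometric series (using $R \leq 1/4$), plus $O(1)$ from the tail and singleton intervals.

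The real work is bounding $|\mathcal{M}|$ in (iv). For a single interval $I_i$ with $i \leq \ell$, the number of multisets of at most $D_i$ positive integers with sum at most $N_i$ is bounded by $D_i N_i \binom{N_i + D_i}{D_i} \leq N_i^{O(D_i)}$. Multiplying across all intervals, together with a $\poly(1/\epsilon)$ factor for the tail and singleton intervals, gives
\[
\log|\mathcal{M}| = O\!\left(\sum_{i=0}^{\ell} D_i \log N_i\right) + O(\log(1/\epsilon)\log\log(1/\epsilon)).
\]
Substituting $B_i = R^{2^i}$, so that $D_i = O(\log(1/\epsilon)/(2^i \log(1/R)))$ and $\log N_i = \log \var[\p] + 2^i \log(1/R)$, one obtains
\[
D_i \log N_i = O\!\left(\frac{\log(1/\epsilon)\log \var[\p]}{2^i \log(1/R)}\right) + O(\log(1/\epsilon)).
\]
The second term, summed over $O(\log\log(1/\epsilon))$ many $i$, contributes $O(\log(1/\epsilon)\log\log(1/\epsilon))$; the first term telescopes geometrically to $O(\log(1/\epsilon)\log \var[\p]/\log(1/R))$.

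The step I expect to be the main obstacle is reconciling these bounds across both regimes of $R$. When $R = 1/4$, the definition of $R$ forces $\var[\p] = O(\log(1/\epsilon))$ and hence $\log \var[\p] = O(\log\log(1/\epsilon))$, with $\log(1/R) = \Theta(1)$, yielding $O(\log(1/\epsilon)\log\log(1/\epsilon))$. When $R = \sqrt{\log(1/\epsilon)/\var[\p]} < 1/4$, one has $\log(1/R) = \tfrac{1}{2}\log(\var[\p]/\log(1/\epsilon))$, and since $\var[\p] = \poly(1/\epsilon)$ a short case check (with the binding subcase $\var[\p] \leq \polylog(1/\epsilon)$) shows $\log\var[\p]/\log(1/R) = O(\log\log(1/\epsilon))$. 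In either regime, $\log|\mathcal{M}| = O(\log(1/\epsilon)\log\log(1/\epsilon))$, giving $|\mathcal{M}| = (1/\epsilon)^{O(\log\log(1/\epsilon))}$. Enumeration in $\poly(|\mathcal{M}|)$ time is then direct: iterate independently over each interval's multiset of multiplicities (encoded as at most $D_i$ sorted integers in $[1, N_i]$) and discard the candidates whose multiplicities do not sum to $n$.
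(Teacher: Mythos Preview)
Your proposal is correct and follows essentially the same approach as the paper: define $\mathcal{M}$ as all multisets of multiplicities compatible with the per-interval ceilings from Theorem~\ref{thm:complex-distinct-param} (applied at accuracy $\epsilon^2$), bound the number of choices per interval by $N_i^{O(D_i)}$, take the product, and finish with a two-regime case analysis on $\var[\p]$ (equivalently, on $R$). The only cosmetic differences are that the paper determines the multiplicity at $0$ from the constraint $\sum m_i = n$ rather than enumerating and discarding (so it does not need the reduction to $n=\poly(1/\epsilon)$), and that the paper carries the product directly rather than passing to logarithms; the underlying computations and case split are the same.
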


This is proved in Appendix \ref{ap:NumberofMultiplicitiesLem} by a simple counting argument. We multiply the number of multiplicities for each interval, which is at most the maximum number of parameters to the power of the maximum number of distinct parameters in that interval, giving $(1/\eps)^{O(\log\log(1/\epsilon))}$ possibilities.

\noindent We now proceed to prove Theorem~\ref{thm:complex-distinct-param}.
We will require the following result from algebraic geometry:
\begin{theorem}[Part of Theorem 4.2 from~\cite{R:11}]  \label{thm:reiner}
Given $m+1$ symmetric polynomials in $n$ variables
$F_j(x)$, $0 \le j \le m$, $x \in \R^n$,
let $K = \{ x \in \R^n \mid F_j(x) \geq 0, \textrm{for all } 1 \le j \le m \}$.
Let $k = \max \{2, \lceil \mathrm{deg} (F_0) /2 \rceil, \mathrm{deg} (F_1) , \mathrm{deg} (F_2) , \ldots, \mathrm{deg} (F_m) \}$.
Then, the minimum value of $F_0$ on $K$ is achieved by a point with at most $k$ distinct co-ordinates.
\end{theorem}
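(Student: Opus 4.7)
\textbf{Proof Proposal for Theorem~\ref{thm:reiner}.}

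The plan is to reduce the problem to a question about power-sum fibers and then apply a Timofte-style moment-matching argument on each fiber. First I would use the fundamental theorem of symmetric polynomials in its ``weighted-degree'' form: every symmetric polynomial of degree $d$ in $x_1,\dots,x_n$ can be written uniquely as a polynomial in the Newton power sums $p_j(x)=\sum_i x_i^j$ for $j=1,\dots,d$, with weighted degree $\leq d$ (where $p_j$ has weight $j$). Applying this to each $F_j$ with $j\geq 1$ (which has degree $\leq k$), we obtain polynomials $G_j$ with $F_j(x)=G_j(p_1(x),\dots,p_k(x))$. Consequently, the feasible set $K$ is a union of fibers of the power-sum map $\pi:\R^n\to\R^k$, $\pi(x)=(p_1(x),\dots,p_k(x))$: if $V_c \eqdef \pi^{-1}(c)$, then either $V_c\subseteq K$ or $V_c\cap K=\emptyset$, according to whether the constraints $G_j(c)\geq 0$ hold. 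Assuming the minimum is attained at some $x^*$ (which I would justify by a compactness or coercivity reduction, truncating $K$ by a large ball and taking a limit), I restrict attention to the single fiber $V_{c^*}\subseteq K$ with $c^*=\pi(x^*)$, and it suffices to produce a point of $V_{c^*}$ with $\leq k$ distinct coordinates and $F_0$-value equal to $F_0(x^*)$.

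Since $F_0$ has degree $\leq 2k$, we may write $F_0(x)=G_0(p_1(x),\dots,p_{2k}(x))$, so on $V_{c^*}$ the objective is a function of only the $k$ ``free'' power sums $u(y)\eqdef(p_{k+1}(y),\dots,p_{2k}(y))$. The core lemma I would prove is the following \emph{moment-matching} statement: for every $y\in V_{c^*}$ there exists $y'\in V_{c^*}$ with at most $k$ distinct coordinates and $u(y')=u(y)$. Granted this, $y'$ has the same $F_0$-value as $y=x^*$ while lying in $K$ and having at most $k$ distinct coordinates, completing the proof. The lemma is a constrained variant of Gauss quadrature: given the integer-weighted atomic measure $\mu_y=\sum_{i=1}^n\delta_{y_i}$ with prescribed first $2k$ moments $p_1(y),\dots,p_{2k}(y)$, produce another atomic measure of total mass $n$ supported on $\leq k$ points matching the same first $2k$ moments. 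For real positive weights this is exactly the $k$-point Gauss quadrature rule, which is guaranteed to exist.

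To upgrade from real positive weights to integer multiplicities summing to $n$, I would invoke a Timofte-style continuous flow inside $V_{c^*}$: starting from $y$ with $r>k$ distinct values $v_1<\cdots<v_r$ and fixed integer multiplicities $n_1,\dots,n_r$ (with $\sum n_i=n$), deform the values along a smooth path $(v_1(t),\dots,v_r(t))$ subject to the $2k$ linear-algebraic constraints $\sum_i n_i\, v_i(t)^j = p_j(y)$ for $j=1,\dots,2k$. Since $r>k$, the fiber $V_{c^*}$ has positive dimension at $y$; and the additional constraint $u(y(t))=u(y)$ (fixing the remaining $k$ power sums) cuts this down, but one shows by a dimension/connectedness analysis that a path exists that continues until two of the $v_i$'s collide, strictly reducing the number of distinct coordinates while preserving all of $p_1,\dots,p_{2k}$. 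Iterating, we reach a configuration with $\leq k$ distinct coordinates, which is the desired $y'$. The Lagrange-multiplier viewpoint gives an alternative (and sharper) derivation: at any local minimizer $y^*$ of $F_0$ on $V_{c^*}$, the condition $\nabla_y F_0(y^*)\in\mathrm{span}(\nabla p_1,\dots,\nabla p_k)$ translates coordinate-wise into the identity $Q(y_i^*)=0$ for a single univariate polynomial $Q(t)=\sum_{j=1}^{k}\lambda_j\, j\, t^{j-1}$ assembled from the multipliers $\lambda_j$, so that all distinct values of the $y_i^*$ lie among the $\leq k$ roots of $Q$.

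The main obstacle is the precise ``half-degree'' bound $\lceil\deg(F_0)/2\rceil$ rather than the naive $\deg(F_0)$. A direct Lagrange argument using only the stationarity of $F_0$ and the constraints $F_1,\dots,F_m$ would yield a univariate polynomial of degree $\leq 2k-1$ (coming from $\deg F_0\leq 2k$), giving the weaker bound $\leq 2k-1$ distinct coordinates. The improvement to $k$ comes precisely from exploiting the fiber reduction above: the objective is really a polynomial in only the $k$ free power sums $p_{k+1},\dots,p_{2k}$, so the relevant multiplier polynomial has \emph{weight} $\leq 2k$ but when evaluated against $\nabla p_j$ for $j\leq k$ it collapses to degree $\leq k-1$. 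Carrying out this degree bookkeeping carefully, together with handling degenerate cases where the Jacobian of $\pi$ drops rank (which is why $k\geq 2$ is enforced), is the technical heart of the argument and is where I expect to invest the most effort, following the detailed analysis of \cite{R:11}.
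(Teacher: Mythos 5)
The paper does not prove this statement: it is imported wholesale as Theorem~4.2 of Riener~\cite{R:11} (building on Timofte's half-degree principle), and the authors cite it as a black box. So there is no in-paper proof to compare against; what I can do is assess whether your sketch would actually establish the result.

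Your high-level outline --- rewrite everything in Newton power sums, restrict to a fiber of $\pi_k = (p_1,\dots,p_k)$ where the constraints are constant, and argue that on each fiber the objective can be pushed to a point with $\le k$ distinct coordinates --- is the right skeleton and is indeed the one used by Timofte and Riener. However, the two ways you propose to close the argument both have real gaps, and you are missing the single observation that makes the half-degree bound work. Because $\deg F_0 \le 2k$ and $p_j$ carries weight $j$, any monomial of $G_0$ that involves some $p_j$ with $j>k$ can involve it only to the first power and cannot involve a second power sum of index $>k$; hence \emph{on a fiber of $\pi_k$, the objective is an affine function of $(p_{k+1},\dots,p_{2k})$.} That linearity is precisely what converts the problem to pushing a linear functional to the boundary of $u(V_{c^*})$, and it is the engine behind the factor-of-two improvement. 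Your proposal treats $F_0$ as a generic polynomial in the higher power sums and never exploits this.

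As a consequence, your ``sharper'' Lagrange derivation is not correct as written. The stationarity condition on $V_{c^*}$ reads $\partial F_0/\partial x_i(y^*) - \sum_{j=1}^k \lambda_j\, j\,(y_i^*)^{j-1} = 0$ for all $i$. The first term does not vanish, so the relevant univariate polynomial is $Q(t) = R(t) - \sum_{j=1}^k \lambda_j\, j\, t^{j-1}$ where $R$ has degree up to $\deg F_0 - 1 = 2k-1$; this yields at most $2k-1$ distinct coordinates, not $k$. You acknowledge this weaker bound earlier in the paragraph and then assert the collapse to degree $k-1$, but the ``degree bookkeeping'' you allude to is exactly the point that must be argued, and naively it is false. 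Separately, your ``moment-matching lemma'' (every $y\in V_{c^*}$ has a partner $y'\in V_{c^*}$ with $\le k$ distinct coordinates and $u(y')=u(y)$) is essentially equivalent to the theorem on the fiber, and Gauss quadrature does not give it: quadrature produces nonnegative real weights, while you need a genuine point of $\R^n$, i.e.\ integer multiplicities summing to $n$, and the flow you invoke to fix this is sketched but not established (you would need to rule out the flow exiting $V_{c^*}$ or stalling before a collision). In short, the plan recognizes the right landscape but would not compile into a proof without the affine-on-fibers observation and a careful treatment of the deformation/connectedness step; for the actual argument one should follow Riener~\cite{R:11} and Timofte directly, as the paper does by citation.
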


As an immediate corollary, we obtain the following:

\begin{corollary} \label{cor:distinct-equations-in-ab}
If a set of multivariate polynomial equations $F_i(x)=0$, $x \in \mathbb{R}^n$, $1 \leq i \leq m$,
with the degree of each $F_i(x)$ being at most $d$ has a solution $x \in [a,b]^n$,
then it has a solution $y \in [a,b]^n$ with at most $d$ distinct values of the variables in $y$.
\end{corollary}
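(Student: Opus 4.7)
The plan is to apply Theorem~\ref{thm:reiner} with the zero polynomial as the objective, using the equations $F_i(x)=0$ together with an encoding of the box $[a,b]^n$ as the constraint set. Throughout I assume that each $F_i$ is symmetric in $x_1,\dots,x_n$; this is automatic in every application of the corollary in the paper, since the $F_i$'s there arise from matching power-sum moments.

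First I would rewrite each equation $F_i(x)=0$ as the pair of symmetric polynomial inequalities $F_i(x)\ge 0$ and $-F_i(x)\ge 0$, each of degree at most $d$. Next I would encode $x\in[a,b]^n$ symmetrically: setting $u_j(x)=(x_j-a)(b-x_j)$, the condition $u_j(x)\ge 0$ for all $j$ is, by Vieta's formulas applied to the polynomial $\prod_j(t+u_j(x))$, equivalent to $e_k(u_1(x),\dots,u_n(x))\ge 0$ for $k=1,\dots,n$; each of these is a symmetric polynomial in $x$. Feeding all of this into Theorem~\ref{thm:reiner} with $F_0\equiv 0$, the hypothesis that $x^*$ is feasible guarantees that the trivial minimum $0$ is attained, and the theorem yields a minimizer $y\in[a,b]^n$ with $F_i(y)=0$ for all $i$ and at most $k$ distinct coordinates, where $k$ is the maximum of $2$, $\lceil\deg F_0/2\rceil=0$, and the degrees of the constraint polynomials.

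The main obstacle is that the naive symmetric encoding of $[a,b]^n$ above has constraint degrees as large as $2n$, which would give only the trivial bound $k\le 2n$. The resolution is to invoke the more general form of Riener's result in~\cite{R:11}, which applies to any $S_n$-invariant closed semi-algebraic domain: in that version the box $[a,b]^n$ can be taken as the ambient domain and contributes no degree penalty, because the argument reduces any symmetric optimization problem of degree $\le d$ to a problem in the first $d$ power sums and thus never sees the description complexity of the domain. With this, the $F_i$-constraints alone control $k\le\max\{2,d\}\le d$ (the case $d=1$ being immediate), producing a point $y\in[a,b]^n$ that solves the system and has at most $d$ distinct coordinate values, as desired.
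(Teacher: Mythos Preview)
Your proposal is correct and follows the same route the paper intends: both deduce the corollary from Riener's theorem (Theorem~\ref{thm:reiner}), and the paper simply labels it ``an immediate corollary'' without spelling out any argument. You are more careful than the paper in two respects. First, you make explicit the (necessary) symmetry hypothesis on the $F_i$, which the paper leaves implicit. Second, you flag the genuine subtlety that encoding $[a,b]^n$ by symmetric polynomial inequalities naively costs degree $2n$, and you resolve it by appealing to the stronger form of Riener's result in~\cite{R:11} that treats an $S_n$-invariant domain like a box without a degree penalty; this is the right fix, and it is consistent with how the paper later applies Theorem~\ref{thm:reiner} directly (in the proof of Theorem~\ref{thm:complex-distinct-param}) over intervals $I_i$ without ever writing down box constraints as polynomials.
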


The following lemma will be crucial:

\begin{lemma}\label{momentMatchLem}
Let $\epsilon>0$.
Let $\p$ and $\q$ be $n$-PBDs with $\p$ having parameters $p_1,\ldots,p_k\leq 1/2$ and $p_1',\ldots,p_m'>1/2$ and $\q$ having parameters $q_1,\ldots,q_k\leq 1/2$ and $q_1',\ldots,q_m'>1/2$. Suppose furthermore that $\var[\p]=\var[\q]=V$ and let $C>0$ be a sufficiently large constant.
Suppose furthermore that for {$A=\min(3,C\sqrt{\log(1/\epsilon)/V})$} and for all positive integers $\ell$ it holds
\begin{equation}\label{reqMomentBoundEqn}
A^\ell \left(\left|\sum_{i=1}^k p_i^\ell-\sum_{i=1}^k q_i^\ell \right|+\left|\sum_{i=1}^m (1-p'_i)^\ell-\sum_{i=1}^m (1-q'_i)^\ell \right|\right) <\epsilon/C\log(1/\epsilon).
\end{equation}
Then $\dtv(\p,\q)<\epsilon$.
\end{lemma}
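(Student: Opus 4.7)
The plan is to control $\dtv(\p,\q)$ by comparing Discrete Fourier Transforms modulo a suitably chosen $M$. Take $M = \Theta(\sqrt{V\log(1/\eps)} + \log(1/\eps))$, so that by Chernoff bounds the effective support $S$ of both $\p$ and $\q$ fits in a common interval of length $O(M)$; note that the $\ell=1$ instance of \eqref{reqMomentBoundEqn} forces $|\E[\p]-\E[\q]| \le O(\eps/\log(1/\eps)) \ll 1$, so $\p$ and $\q$ have essentially the same mean. Cauchy--Schwarz on $S$ combined with Parseval's identity then gives
\[
\dtv(\p,\q) \;\le\; \eps + \tfrac{1}{2}\sqrt{|S|/M}\cdot\Bigl(\sum_{\xi=0}^{M-1}|\widehat{\p}(\xi)-\widehat{\q}(\xi)|^2\Bigr)^{1/2},
\]
reducing the task to a frequency-by-frequency comparison of the DFTs.

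The crux is a pointwise bound on $|\widehat{\p}(\xi)-\widehat{\q}(\xi)|$ via a logarithmic Taylor expansion. Writing $z=e(\xi/M)-1$ and $w=1-e(-\xi/M)$ (so that $|z|=|w|=2|\sin(\pi\xi/M)|\le 2$), for small parameters we have the absolutely convergent series $\log(1+p_iz)=\sum_{\ell\ge 1}\tfrac{(-1)^{\ell-1}}{\ell}p_i^\ell z^\ell$, while for parameters exceeding $1/2$ we factor $1+p_i'(e(\xi/M)-1) = e(\xi/M)\cdot(1-(1-p_i')w)$ and expand its log similarly. This exhibits both $\log\widehat{\p}(\xi)$ and $\log\widehat{\q}(\xi)$ as a phase $2\pi i\xi m/M$ plus a power series whose $\ell$-th coefficient is the $\ell$-th power sum of the small parameters (in $z$) minus the $\ell$-th power sum of the $1-p_i'$ (in $w$). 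The phases cancel in $\log\widehat{\p}-\log\widehat{\q}$ because $\p$ and $\q$ share the same count $m$ of large parameters, and the hypothesis \eqref{reqMomentBoundEqn} gives
\[
|\log\widehat{\p}(\xi)-\log\widehat{\q}(\xi)| \;\le\; \frac{\eps}{C\log(1/\eps)}\sum_{\ell\ge 1}\frac{(|z|/A)^\ell}{\ell} \;=\; -\frac{\eps}{C\log(1/\eps)}\log\!\bigl(1-|z|/A\bigr),
\]
which is $O(\eps/\log(1/\eps))$ whenever $|z|/A$ is bounded away from $1$; the inequality $|e^u-1|\le 2|u|$ then yields $|\widehat{\p}(\xi)-\widehat{\q}(\xi)|\le 2\delta\,|\widehat{\q}(\xi)|$ with $\delta = O(\eps/\log(1/\eps))$.

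The argument now splits along the definition of $A$. In the \emph{large-variance} regime $V\ge C^2\log(1/\eps)/9$ one has $A = C\sqrt{\log(1/\eps)/V}\le 3$; I apply the above estimate for $|\xi|\le T := MA/(4\pi)$ (so $|z|/A\le 1/2$), and bound the remainder via the product formula $|\widehat{\p}(\xi)|^2=\prod_i(1-2p_i(1-p_i)(1-\cos 2\pi\xi/M))\le\exp(-cV\xi^2/M^2)$ (and similarly for $\widehat{\q}$). At the cutoff, $VT^2/M^2 = \Omega(C^2\log(1/\eps))$, which for $C$ large exceeds $\log(1/\eps)$ by any desired factor, making the tail contribution to $\sum_\xi|\widehat{\p}-\widehat{\q}|^2$ only $\eps^{\Omega(1)}$. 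In the \emph{small-variance} regime $V< C^2\log(1/\eps)/9$ we have $A=3>2\ge|z|$, so $|z|/A\le 2/3$ uniformly in $\xi\in [0,M-1]$ and the Taylor bound applies across the whole spectrum, bypassing the tail estimate entirely.

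Combining these ingredients with the classical PBD pointwise bound $\max_j\q(j)\le O(1/\sqrt{V+1})$ gives $\sum_\xi|\widehat{\q}(\xi)|^2 = M\|\q\|_2^2 \le O(M/\sqrt{V+1})$; hence $\|\p-\q\|_2\le O(\delta/(V+1)^{1/4})$, and multiplying by $\sqrt{|S|}=O(\sqrt M)$ together with the identity $M/\sqrt{V+1} = O(\sqrt{\log(1/\eps)})$ (verified case-by-case in both regimes) yields $\dtv(\p,\q) \le \eps + O(\delta\,\log^{1/4}(1/\eps)) = O(\eps/\log^{3/4}(1/\eps))+\eps \le \eps$ for $C$ sufficiently large. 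I anticipate the main technical obstacle to be the bookkeeping needed so that both regimes yield matching asymptotic bounds (particularly at the boundary $V\sim\log(1/\eps)$), together with the minor issue of justifying absolute convergence of the log-Taylor series uniformly---handled by working with the series representations directly rather than any branch of the complex logarithm, and by taking $M$ odd (or perturbing infinitesimally) to avoid the isolated zeros of $\widehat{\p}$ and $\widehat{\q}$ at $\xi=M/2$ that can occur when some $p_i=1/2$.
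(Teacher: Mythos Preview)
Your approach is essentially the paper's: both Taylor-expand $\log\widehat{\p}$ and $\log\widehat{\q}$ as in Equation~(\ref{logTaylorEqn}), use the moment hypothesis \eqref{reqMomentBoundEqn} to bound the difference termwise, and convert DFT closeness to $\dtv$ via Parseval/Cauchy--Schwarz. The paper packages the last step as Lemma~\ref{closePBDLem} and uses the simpler additive bound $|\widehat{\p}(\xi)-\widehat{\q}(\xi)|\le|\log\widehat{\p}(\xi)-\log\widehat{\q}(\xi)|$ (valid because $|\widehat{\p}|,|\widehat{\q}|\le 1$) summed over $O(\log(1/\eps))$ frequencies, which sidesteps your multiplicative route through $\sum_\xi|\widehat{\q}(\xi)|^2=M\|\q\|_2^2$; note also that your claim $M/\sqrt{V+1}=O(\sqrt{\log(1/\eps)})$ fails when $V=O(1)$ (it is $\Theta(\log(1/\eps))$ there), but this is harmless since $\delta=O(\eps/(C\log(1/\eps)))$ still absorbs the resulting extra $\sqrt{\log(1/\eps)}$ factor.
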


In practice, we shall only need to deal with a finite number of $\ell$'s,
since we will be considering the case where all $p_i,q_i$ or $1-p'_i,1-q'_i$ that do not appear in pairs
will have size less than $1/(2A)$.
Therefore, the size of the sum in question will be sufficiently small automatically for $\ell$ larger than $\Omega(\log((k+m)/\eps))$.

The basic idea of the proof will be to show that the Fourier transforms of $\p$ and $\q$ are close to each other.
In particular, we will need to make use of the following intermediate lemma:

\begin{lemma} \label{closePBDLem}
Let $\p$, $\q$ be PBDs with $|\E[\p]-\E[\q]|=O(\var[\p]^{1/2})$ and
$\var[\p]+1=\Theta(\var[\q]+1)$.
Let $M=\Theta(\log(1/\epsilon)+\sqrt{\var[\p] \log(1/\epsilon)})$  and $\ell = \Theta(\log(1/\eps))$ be positive integers with the implied constants sufficiently large.
If $\sum_{-\ell \le \xi \le \ell} |\widehat{\p}(\xi)-\widehat{\q}(\xi)|^2 \le \eps^2/16$,
then $\dtv(\p,\q) \leq \eps.$
\end{lemma}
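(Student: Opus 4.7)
The plan is to turn the hypothesized low-frequency $L_2$ bound on $\widehat{\p}-\widehat{\q}$ into a TV bound by (i) concentrating $\p$ and $\q$ to a common window of length $M$, (ii) using Cauchy--Schwarz on that window to pass from $L_1$ to $L_2$, (iii) applying Parseval for the DFT modulo $M$ to relate the $L_2$ distance to a sum over all Fourier coefficients, and (iv) splitting the full Fourier sum into a low-frequency piece (controlled by the hypothesis) and a high-frequency piece (controlled by the explicit product form of the PBD Fourier transform).

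More concretely, I would first combine a Bernstein/Chernoff-type tail bound for sums of Bernoullis with the hypotheses $|\E[\p]-\E[\q]|=O(\sqrt{\var[\p]})$ and $\var[\q]+1 = \Theta(\var[\p]+1)$ to exhibit an interval $S$ of length $M$ containing $\E[\p]$ such that both $\p$ and $\q$ place all but some tail mass $\tau$ inside $S$. With the implied constant in $M=\Theta(\log(1/\epsilon)+\sqrt{\var[\p]\log(1/\epsilon)})$ chosen sufficiently large, $\tau$ can be driven below $\epsilon/\sqrt{M}$. Let $\tilde{\p},\tilde{\q}$ be the aliased distributions $\tilde{\p}(j)=\sum_{k\in\Z}\p(j+kM)$ on the chosen window; since the wrapped-in mass is at most $\tau$, both $\|\p-\tilde{\p}\|_2$ and $\|\q-\tilde{\q}\|_2$ are $O(\tau)$. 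Cauchy--Schwarz on $S$ combined with the triangle inequality and Parseval for the DFT mod $M$ then gives
\[
\|\p-\q\|_1 \;\le\; 2\tau + O(\sqrt{M}\,\tau) + \Big(\sum_{\xi=0}^{M-1}\bigl|\widehat{\p}(\xi)-\widehat{\q}(\xi)\bigr|^2\Big)^{1/2}.
\]

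It remains to bound the full Fourier sum by $\epsilon^2/8$. The part with $|\xi|\le\ell$ is at most $\epsilon^2/16$ by assumption. For $|\xi|>\ell$ I would use the product formula $|\widehat{\p}(\xi)|^2=\prod_i\bigl(1-2p_i(1-p_i)(1-\cos(2\pi\xi/M))\bigr)\le \exp(-c\,\var[\p]\,(\xi/M)^2)$, valid for $|\xi|\le M/2$ (via $1-\cos(2\pi y)\ge 8y^2$ for $|y|\le 1/2$ and $1-x\le e^{-x}$), and likewise for $\q$. Here I would split into two regimes. In the small-variance regime $\var[\p]\lesssim\log(1/\epsilon)$, $M=\Theta(\log(1/\epsilon))$, and with the constant in $\ell=\Theta(\log(1/\epsilon))$ chosen large enough one has $\ell\ge(M-1)/2$; then $\{-\ell,\dots,\ell\}\bmod M$ already covers every Fourier index and the ``high'' sum is empty. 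In the large-variance regime $\var[\p]\gtrsim\log(1/\epsilon)$, $M=\Theta(\sqrt{\var[\p]\log(1/\epsilon)})$, so $\var[\p](\xi/M)^2\gtrsim\log(1/\epsilon)$ for $|\xi|>\ell$ provided the ratio of constants in $\ell$ and $M$ is sufficiently large; each high-frequency term is then $\epsilon^{\Omega(1)}$, and summing over the $O(M)=\poly(1/\epsilon)$ high frequencies still yields $O(\epsilon^2)$. Substituting back gives $\dtv(\p,\q)\le\epsilon$.

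The main obstacle is the regime dependence of the pointwise Fourier decay: the bound $|\widehat{\p}(\xi)|^2\le e^{-\Omega(\var[\p](\xi/M)^2)}$ becomes useless when $\var[\p]$ is small, so the high-frequency sum cannot be treated term-by-term in that case. The two-regime analysis above exploits the specific choice $M=\Theta(\log(1/\epsilon)+\sqrt{\var[\p]\log(1/\epsilon)})$, which guarantees that $\ell\ge M/2$ precisely in the small-$\var[\p]$ regime where pointwise Fourier decay fails. A secondary subtlety is that the aliasing error $\tau$ must be driven below $\epsilon/\sqrt{M}$, not merely below $\epsilon$, in order to survive the $\sqrt{M}$ factor coming from Cauchy--Schwarz; this is arranged by taking the constant in $M$ sufficiently large in the concentration step.
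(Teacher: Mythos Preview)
Your approach is essentially the paper's: concentrate both PBDs to a length-$M$ window via Bernstein, reduce modulo $M$, use Cauchy--Schwarz and Plancherel to pass to Fourier space, and kill the high frequencies via the product-form decay of the PBD characteristic function (the paper cites an earlier lemma for the decay rather than deriving it inline, and handles the small-variance regime implicitly via the same observation you make explicit, namely that $\ell\ge M/2$ there so the high-frequency range is empty). Two minor cleanups: your $O(\sqrt{M}\,\tau)$ term is avoidable if you pass to the mod-$M$ reductions in $L_1$ \emph{before} applying Cauchy--Schwarz (losing only $O(\tau)$ in total variation, after which the $\sqrt{M}$ from Cauchy--Schwarz is exactly absorbed by Parseval's normalization---this is how the paper arranges it, so only $\tau=O(\epsilon)$ is needed); and in the large-variance high-frequency sum you should retain the $\xi$-dependence $\exp(-\Omega(\xi^2/\log(1/\epsilon)))$ rather than bounding each term uniformly by $\epsilon^{\Omega(1)}$, so that the tail sum converges without needing $M=\poly(1/\epsilon)$.
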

The proof of this lemma, which is given in Appendix \ref{ap:closePBDLem}, is similar
to (part of) the correctness analysis of the non-proper learning algorithm in~\cite{DKS15}.

\begin{proof}[{\bf Proof of Lemma \ref{momentMatchLem}}]
We proceed by means of Lemma \ref{closePBDLem}. We need only show that for all $\xi$ with $|\xi|=O(\log(1/\epsilon))$ that $|\widehat{\p}(\xi)-\widehat{\q}(\xi)|\ll \epsilon/\sqrt{\log(1/\epsilon)}.$ For this we note that
\begin{align*}
\widehat{\p}(\xi) & = \prod_{i=1}^k ((1-p_i) + p_i e(\xi/M)) \prod_{i=1}^m ((1-p'_i) + p'_i e(\xi/M)) \\
& = e(m\xi/M) \prod_{i=1}^k (1 + p_i (e(\xi/M)-1)) \prod_{i=1}^m (1+ (1-p'_i)(e(-\xi/M)-1)).
\end{align*}
Taking a logarithm and Taylor expanding, we find that
\begin{equation}\label{logTaylorEqn}
\log(\widehat{\p}(\xi)) = 2\pi i m\xi/M + \sum_{\ell=1}^\infty \frac{(-1)^{1+\ell}}{\ell} \left((e(\xi/M)-1)^\ell\sum_{i=1}^k p_i^\ell + (e(-\xi/M)-1)^\ell\sum_{i=1}^m (1-p'_i)^\ell \right).
\end{equation}
A similar formula holds for $\log(\widehat{\q}(\xi))$. Therefore, we have that
$$|\widehat{\p}(\xi)-\widehat{\q}(\xi)|\leq |\log(\widehat{\p}(\xi))-\log(\widehat{\q}(\xi))| \;,$$ which is at most
\begin{align*}
& \sum_{\ell=1}^\infty |e(\xi/M)-1|^\ell \left(\left|\sum_{i=1}^k p_i^\ell-\sum_{i=1}^k q_i^\ell \right|+\left|\sum_{i=1}^m (1-p'_i)^\ell-\sum_{i=1}^m (1-q'_i)^\ell \right|\right)\\
\leq & \sum_{\ell=1}^\infty (2A/3)^\ell \left(\left|\sum_{i=1}^k p_i^\ell-\sum_{i=1}^k q_i^\ell \right|+\left|\sum_{i=1}^m (1-p'_i)^\ell-\sum_{i=1}^m (1-q'_i)^\ell \right|\right)\\
\leq & \sum_{\ell=1}^\infty (2/3)^\ell \epsilon/C\log(1/\epsilon)\\
\ll & \epsilon/C\log(1/\epsilon).
\end{align*}
An application of Lemma \ref{closePBDLem} completes the proof.
\end{proof}


\begin{proof}[{\bf Proof of Theorem \ref{thm:complex-distinct-param}}]
The basic idea of the proof is as follows. First, we will show that it is possible to modify $\p$
in order to satisfy (ii) without changing its mean, increasing its variance (or decreasing it by too much),
or changing it substantially in total variation distance. Next, for each of the other intervals $I_i$ or $J_i$,
we will show that it is possible to modify the parameters that $\p$ has in this interval
to have the appropriate number of distinct parameters,
without substantially changing the distribution in variation distance.
Once this holds for each $i$, conditions (iii) and (iv) will follow automatically.

To begin with, we modify $\p$ to have at most one parameter in $I_{\ell+1}$ in the following way.
We repeat the following procedure.
So long as $\p$ has two parameters, $p$ and $p'$ in $I_{\ell+1}$,
we replace those parameters by $0$ and $p+p'$. We note that this operation has the following properties:
\begin{itemize}
\item The expectation of $\p$ remains unchanged.
\item The total variation distance between the old and new distributions is $O(pp')$, as is the change in variances between the distributions.
\item The variance of $\p$ is decreased.
\item The number of parameters in $I_{\ell+1}$ is decreased by 1.
\end{itemize}
All of these properties are straightforward to verify by considering the effect of just the sum of the two changed variables.
By repeating this procedure, we eventually obtain a new PBD, $\p'$ with the same mean as $\p$, smaller variance,
and at most one parameter in $I_{\ell+1}$. We also claim that $\dtv(\p,\p')$ is small.
To show this, we note that in each replacement, the error in variation distance is at most a constant times
the increase in the sum of the squares of the parameters of the relevant PBD.
Therefore, letting $p_i$ be the parameters of $\p$ and letting $p'_i$ be the parameters of $\p'$,
we have that $\dtv(\p,\p')=O(\sum (p'_i)^2-p_i^2)$.
We note that this difference is entirely due to the parameters that were modified by this procedure.
Therefore, it is at most $(2B_\ell)^2$ times the number of non-zero parameters created.
Note that all but one of these parameters contributes at least $B_\ell/2$ to the variance of $\p'$.
Therefore, this number is at most $2\var[\p]/B_\ell +1$. Hence, the total variation distance between
$\p$ and $\p'$ is at most $O(B_\ell^2)(\var[\p]/B_\ell+1) \leq \epsilon^3.$ Similarly, the variance of our distribution is decreased by at most this much.
This implies that it suffices to consider $\p$ that have at most one parameter in $I_{\ell+1}$.
Symmetrically, we can also remove all but one of the parameters in $J_{\ell+1}$,
and thus it suffices to consider $\p$ that satisfy condition (ii).

Next, we show that for any such $\p$ that it is possible to modify the parameters that $\p$ has in $I_i$ or $J_i$, for any $i$,
so that we leave the expectation and variance unchanged, introduce at most $\epsilon^2$ error in variation distance,
and leave only $O(\log(1/\epsilon)/\log(1/B_{i}))$ distinct parameters in this range. The basic idea of this is as follows.
By Lemma \ref{momentMatchLem}, it suffices to keep $\sum p_i^\ell$ or $\sum (1-p_i)^\ell$
constant for parameters $p_i$ in that range for some range of values of $\ell$.
On the other hand, Theorem \ref{thm:reiner} implies that this can be done while producing only a small number of distinct parameters.

Without loss of generality assume that we are dealing with the interval $I_i$.
Note that if $i=0$ and $\var[\p]\ll \log(1/\epsilon)$, then $B_0=1/4$,
and there can be at most $O(\log(1/\epsilon))$ parameters in $I_0$ to begin with.
Hence, in this case there is nothing to show. Thus, assume that either $i\geq 0$ or that $\var[\p]\gg \log(1/\epsilon)$
with a sufficiently large constant. Let $p_1,\ldots,p_m$ be the parameters of $p$ that lie in $I_i$. Consider replacing them with parameters $q_1,\ldots, q_m$ also in $I_i$ to obtain $\q$. By Lemma \ref{momentMatchLem}, we have that $\dtv(\p,\q)<\epsilon^2$
so long as the first two moments of $\p$ and $\q$ agree and
\begin{equation}\label{replacementMomentEqn}
\min(3,C\sqrt{\log(1/\epsilon)/\var[\p]})^\ell\left|\sum_{j=1}^m p_j^\ell - \sum_{j=1}^m q_j^\ell\right| < \epsilon^3 \;,
\end{equation}
for all $\ell$ (the terms in the sum in Equation (\ref{reqMomentBoundEqn}) coming from the parameters not being changed cancel out). Note that $\min(3,C\sqrt{\log(1/\epsilon)/\var[\p]})\max(p_j,q_j)\leq B_i^{O(1)}$. This is because by assumption either $i>0$ and $\max(p_j,q_j)\leq \sqrt{B_i} \leq 1/4$ or $i=0$ and $B_i = \sqrt{\log(1/\epsilon)/\var[\p]} \ll 1$. Furthermore, note that $\var[\p]\geq m B_{i+1}$. Therefore, $m\leq \poly(1/\epsilon)$. Combining the above, we find that Equation (\ref{replacementMomentEqn}) is automatically satisfied for any $q_j\in I_i$ so long as $\ell$ is larger than a sufficiently large multiple of $\log(1/\epsilon)/\log(1/B_i)$.
On the other hand, Theorem \ref{thm:reiner} implies that there is some choice of $q_j\in I_i$ taking on only $O(\log(1/\epsilon)/\log(1/B_i))$ distinct values,
so that $\sum_{j=1}^m q_j^\ell$ is exactly $\sum_{j=1}^m p_j^\ell$ for all $\ell$ in this range.
Thus, replacing the $p_j$'s in this range by these $q_j$'s, we only change the total variation distance by $\epsilon^2$,
leave the expectation and variance the same (as we have fixed the first two moments),
and have changed our distribution in variation distance by at most $\epsilon^2$.

Repeating the above procedure for each interval $I_i$ or $J_i$ in turn, we replace $\p$ by a new PBD,
$\q$ with the same expectation and smaller variance and $\dtv(\p,\q)<\epsilon$,
so that $\q$ satisfies conditions (i) and (ii). We claim that (iii) and (iv) are necessarily satisfied.
Condition (iii) follows from noting that the number of parameters not 0 or 1 is at most $2+2\var[\p]/B_\ell$,
which is $\poly(1/\epsilon)$. Therefore, the expectation of $\q$ is the number of parameters
equal to $1+\poly(1/\epsilon)$. Condition (iv) follows upon noting that $\var[\q]\leq \var[\p]$
is at least the number of parameters in $I_i$ or $J_i$ times $B_i/2$ (as each contributes at least $B_i/2$ to the variance).
This completes the proof of Theorem~\ref{thm:complex-distinct-param}.
\end{proof}

\section{Proper Learning Algorithm} \label{sec:alg}

Given samples from an unknown PBD $\p$,
and given a collection of intervals and multiplicities
as described in Theorem \ref{thm:complex-distinct-param},
we wish to find a PBD $\q$ with those multiplicities that approximates $\p$.
By Lemma \ref{momentMatchLem}, it is sufficient to find such a $\q$
so that $\widehat{\q}(\xi)$ is close to $\widehat{\p}(\xi)$ for all small $\xi$.
On the other hand, by Equation (\ref{logTaylorEqn}) the logarithm of the Taylor series
of \new{$\widehat{\q}$}
is given by an appropriate expansion in the parameters.
Note that if $|\xi|$ is small, due to the $(e(\xi/M)-1)^\ell$ term,
the terms of our sum with $\ell \gg \log(1/\eps)$ will automatically be small.
By truncating the Taylor series, we get a polynomial in the parameters
that gives us an approximation to $\log(\widehat{\q}(\xi))$.
By applying a truncated Taylor series for the exponential function,
we obtain a polynomial in the parameters of $\q$ which approximates its Fourier coefficients.
This procedure yields a system of polynomial equations whose solution
gives the parameters of a PBD that approximates $\p$.
Our main technique will be to solve this system of equations to obtain our output distribution using the following result:

\begin{theorem}[\cite{Renegar92-sicomp, Renegar92-jsc}] \label{renegar}
Let $P_i: \R^n \to \R$, $i=1, \ldots, m$, be $m$ polynomials over the reals each of maximum degree at most $d$.
Let $K = \{x \in \R^n: P_i(x) \ge 0, \textrm{ for all } i=1, \ldots,  m\}$. If the coefficients of the $P_i$'s are rational numbers with bit complexity at most $L$,
there is an algorithm that runs in time $\poly(L, (d \cdot m)^n)$ and decides if $K$ is empty or not. Further, if $K$ is non-empty, the algorithm runs in time
$\poly(L, (d \cdot m)^n, \log(1/\delta))$ and outputs a point in $K$ up to an $L_2$ error $\delta$.
\end{theorem}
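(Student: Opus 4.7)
The plan is to prove Renegar's theorem via the \emph{critical point method} of real algebraic geometry, combined with efficient univariate real root isolation. The first step would be to reduce the (possibly unbounded) emptiness question to a compact one. Using classical bounds on the algebraic numbers appearing as coordinates of a real solution, one shows that any non-empty basic closed semi-algebraic set defined by polynomials of degree at most $d$ and bit complexity at most $L$ contains a point of $\ell_2$-norm at most $2^{\poly(L,(dm)^n)}$. Intersecting $K$ with a ball of this radius yields a compact $K'$ that is non-empty iff $K$ is.

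The second step is the critical point construction. After applying a small symbolic perturbation so that $K'$ is smooth, one shows that every connected component of $K'$ contains a critical point of a generic linear form $\langle u, x\rangle$ restricted to $K'$. By Lagrange multipliers, such critical points, taken across all subsets $S$ of simultaneously active constraints, are the solutions of a finite union of polynomial systems of the form $\{P_i(x)=0 : i \in S\} \cup \{\text{rank condition on the Jacobian}\}$. Each such system is zero-dimensional (generically), and sampling one point from each connected component of $K'$ reduces to enumerating real solutions of these systems.

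The third step is to solve each zero-dimensional system in time $\poly(L,(dm)^n)$. I would use a rational univariate representation: by elimination (resultants, or equivalently a Gröbner-style reduction with respect to a linear separating form), one obtains a single univariate polynomial $h(t) \in \mathbb{Q}[t]$ of degree $(dm)^{O(n)}$ and bit complexity $\poly(L,(dm)^n)$, together with rational parametrizations $x_i = g_i(t)/g_0(t)$. Isolating the real roots of $h$ via Sturm sequences produces a finite candidate list, each of which is tested against the original $P_i \gs 0$ inequalities; if any survives, $K$ is non-empty. For the output guarantee, one refines the isolating interval of a surviving root by bisection until the induced point has $\ell_2$-error at most $\delta$, contributing the $\log(1/\delta)$ factor.

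The main obstacle is the complexity accounting. A naive elimination using cylindrical algebraic decomposition produces a doubly exponential blow-up in $n$, which is far too large. Renegar's key contribution, which this proposal would have to execute carefully, is to encode the critical-point conditions in a \emph{single} well-structured polynomial system (with auxiliary variables for the Lagrange multipliers and for rank certificates) so that the univariate elimination increases degrees only as $(dm)^{O(n)}$ and bit lengths only polynomially in $L$ and $(dm)^n$. Given this bound, real root isolation runs in $\poly(L,(dm)^n)$ time by standard Sturm-sequence or Descartes-type methods, yielding the stated overall complexity.
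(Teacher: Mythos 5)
The paper itself does not prove this theorem; it is imported as a black box from Renegar's work on the decision problem for the existential theory of the reals, so there is no in-paper proof against which to check your argument. With that caveat, your sketch is a reasonable high-level account of a standard singly-exponential approach to this kind of result — the critical point method — though it is closer in spirit to Grigoriev--Vorobjov and Basu--Pollack--Roy than to Renegar's original treatment, which proceeds via a multivariate $u$-resultant and a careful homogenization and infinitesimal perturbation scheme rather than explicit Lagrange-multiplier systems. Two places where your outline is looser than it should be: when you enumerate ``all subsets $S$ of simultaneously active constraints,'' you must note that a smooth critical point of a generic linear form on a perturbed variety in $\R^n$ has at most $n$ active constraints, so the enumeration is over $\binom{m}{\le n} = m^{O(n)}$ subsets rather than $2^m$; and the ``generic linear form'' and the infinitesimal perturbation both need to be made effective with controlled bit complexity (working over a field of Puiseux series or a deterministically chosen rational perturbation), since otherwise the polynomial dependence on $L$ is not established. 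You correctly flag the complexity accounting as the crux: in Renegar's proof the entire difficulty is in showing that the elimination to a univariate representation inflates degrees only to $(dm)^{O(n)}$ and coefficient bit lengths only polynomially in $L$ and $(dm)^n$, and your outline takes this as an unproven black box. So what you have is a faithful roadmap of the known strategy rather than a self-contained proof, but as a summary of the citation it is sound.
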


In order to set up the necessary system of polynomial equations, we have the following theorem:

\begin{theorem}\label{systemThm}
Consider a PBD $\p$ with $\var[\p]<\poly(1/\epsilon)$,
and real numbers $\tilde \sigma \in [\sqrt{\var[\p]}/2,2\sqrt{\var[\p]}+1]$
and $\tilde \mu$ with $|\E[\p]-\tilde \mu|\leq \tilde\sigma$.
Let $M$ be as above and let $\ell$ be a sufficiently large multiple of $\log(1/\epsilon)$.
Let $h_\xi$ be complex numbers for each integer $\xi$
with $|\xi| \le \ell$ so that $\sum_{|\xi|\leq \ell} |h_\xi-\widehat{\p}(\xi)|^2 < \epsilon^2/16.$

Consider another PBD with parameters $q_i$ of multiplicity $m_i$ contained in intervals $[a_i,b_i]$ as described in Theorem \ref{thm:complex-distinct-param}. There exists an explicit system $\mathcal{P}$ of $O(\log(1/\epsilon))$ real polynomial inequalities each of degree $O(\log(1/\epsilon))$ in the $q_i$ so that:
\begin{itemize}
\item[(i)] If there exists such a PBD of the form of $\q$ with $\dtv(\p,\q) < \epsilon/\ell$, $\E[\q]=\E[\p]$, and $\var[\p]\geq \var[\q] \geq \var[\p]/2$, then its parameters $q_i$ yield a solution to $\mathcal{P}$.
\item[(ii)] Any solution $\{q_i\}$ to $\mathcal{P}$ corresponds to a PBD $\q$ with $\dtv(\p,\q) < \epsilon/2.$
\end{itemize}
Furthermore, such a system can be found with rational coefficients of encoding size $O(\log^2(1/\eps))$ bits.
\end{theorem}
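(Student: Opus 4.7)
The plan is to encode the proximity of $\widehat{\q}$ to $h_\xi$ (and hence to $\widehat{\p}$) on $|\xi|\le\ell$ as polynomial inequalities in the $q_i$, combine with the interval, mean, and variance constraints, and then invoke Lemma~\ref{closePBDLem} to conclude $\dtv(\p,\q)\le\eps/2$. Rather than approximating $\widehat{\q}(\xi)$ itself---which would force us to exponentiate a polynomial of degree $\Theta(\log(1/\eps))$ and blow the total degree up to $\Theta(\log^2(1/\eps))$---I match logs directly: truncating the series of Equation~(\ref{logTaylorEqn}) at its first $L=\Theta(\log(1/\eps))$ terms yields a complex polynomial $\widetilde{L}_\xi(q)$ of degree $L$ in the $q_i$, which I will require to lie within a small slack $\delta$ of a data-derived target $\tau_\xi\approx\log\widehat{\p}(\xi)$.

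For the truncation to be accurate, the log series must converge geometrically. By Theorem~\ref{thm:complex-distinct-param}, every effective parameter $r_i := \min(q_i, 1-q_i)$ is at most $1/2$, and by choosing the constant in $M=\Theta(\log(1/\eps)+\sqrt{V\log(1/\eps)})$ large enough compared to $\ell$ we ensure $|e(\pm\xi/M)-1|\le 2-c$ for an absolute $c>0$, so $|r_i(e(\pm\xi/M)-1)|\le 1-c/2$. The per-parameter truncation error is then at most $(1-c/2)^{L+1}/c$, and summed against total multiplicity $n\le\mathrm{poly}(1/\eps)$ this is $\le\eps^{100}$ for $L$ a sufficiently large multiple of $\log(1/\eps)$. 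The target $\tau_\xi$ is built from $\{h_\xi\}$ by phase-unwrapping: set $\tau_0=0$ and inductively choose $\tau_\xi$ to be the branch of $\log h_\xi$ for which $\tau_\xi-\tau_{\xi-1}$ is closest to $-2\pi i\tilde\mu/M$. Whenever $|h_\xi|\ge\eps^{100}$ this is well-defined and satisfies $\tau_\xi\approx\log\widehat{\p}(\xi)$ in the branch consistent with the series; for $\xi$ with $|h_\xi|\le\eps^{100}$ I instead add a direct polynomial constraint of the form $(u_\xi(q))^2+(v_\xi(q))^2\ge B$ (where $\widetilde{L}_\xi(q)=u_\xi(q)+iv_\xi(q)$ and $B$ is a threshold forcing $|\widehat{\q}(\xi)|\le\eps^{50}$ via the approximate identity $|\widehat{\q}(\xi)|\approx e^{\mathrm{Re}\,\widetilde{L}_\xi(q)}$).

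The final system $\mathcal{P}$ consists of: the interval inequalities $a_i\le q_i\le b_i$; the linear mean constraint $|\sum_i m_iq_i-\tilde\mu|\le c_0\tilde\sigma$ and quadratic variance constraint $|\sum_i m_iq_i(1-q_i)-\tilde\sigma^2|\le c_1\tilde\sigma^2$ for small absolute constants $c_0,c_1$; and for each $|\xi|\le\ell$ one real polynomial inequality of degree $2L$ obtained by expanding $|\widetilde{L}_\xi(q)-\tau_\xi|^2\le\delta^2$ (plus the special-case inequality above when $|h_\xi|$ is small). There are $O(\log(1/\eps))$ inequalities in total, each of degree $O(\log(1/\eps))$. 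Completeness (i): if $\q$ as in the statement exists, then $\dtv(\p,\q)<\eps/\ell$ combined with the hypothesis on $\{h_\xi\}$ gives $|\widehat{\q}(\xi)-h_\xi|$ small pointwise, hence $|\log\widehat{\q}(\xi)-\tau_\xi|\le\delta/2$ by our choice of branch, and since $|\widetilde{L}_\xi(q)-\log\widehat{\q}(\xi)|\le\eps^{100}$ the inequalities are satisfied. Soundness (ii): if $\{q_i\}$ solves $\mathcal{P}$, running the same chain in reverse gives $|\widehat{\q}(\xi)-h_\xi|=O(\delta)$, so $\sum_{|\xi|\le\ell}|\widehat{\q}(\xi)-\widehat{\p}(\xi)|^2\le\eps^2/16$ by triangle inequality, and Lemma~\ref{closePBDLem} yields $\dtv(\p,\q)\le\eps/2$.

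Rational coefficients of bit complexity $O(\log^2(1/\eps))$ follow by approximating the transcendentals $e(\pm\xi/M)$, $\tilde\mu$, $\tilde\sigma$, $\tau_\xi$ to $O(\log(1/\eps))$ bits: each coefficient of $\widetilde{L}_\xi$ is a product of $O(L)$ such quantities, and the resulting rounding error can be absorbed into $\delta$. The main obstacle is handling the $2\pi$ branch ambiguity of the complex logarithm, which is resolved by the phase-unwrapping construction of $\tau_\xi$ exploiting the smoothness of $\xi\mapsto\widehat{\p}(\xi)$ together with the mean constraint; a secondary subtlety is guaranteeing geometric convergence of the log Taylor series even for parameters near $1/2$, handled by choosing $M$ sufficiently large relative to $\ell$.
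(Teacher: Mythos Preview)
Your approach is genuinely different from the paper's. The paper also starts from the truncated log series (your $\widetilde{L}_\xi$, their $g_\xi$), but instead of matching logs to data-derived targets $\tau_\xi$, it applies a truncated exponential $\exp'$ to return to Fourier space and imposes the single inequality $\sum_{|\xi|\le\ell}|\exp'(g_\xi-2\pi i o_\xi)-h_\xi|^2\le\eps^2/8$. To keep the degree at $O(\log(1/\eps))$ despite the composition, it introduces $g_\xi$ and $q_\xi$ as auxiliary variables constrained by equalities; since there are only $O(\log(1/\eps))$ of them, Renegar's bound is unaffected. So your stated motivation for avoiding exponentiation (the degree blowup to $\Theta(\log^2(1/\eps))$) is handled differently, not avoided.

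Your route has two real gaps. First, matching logarithms is ill-conditioned exactly where it matters. For $|\xi|$ near $\ell$ one has $|\widehat{\p}(\xi)|=\eps^{\Theta(1)}$ with a constant that can exceed $1$, so even when $|\widehat{\q}(\xi)-h_\xi|\le\eps/\ell$ (as completeness gives), the log difference $|\log\widehat{\q}(\xi)-\log h_\xi|$ can be of order $\eps^{1-C}$ for some $C>1$, forcing $\delta$ large; but soundness needs $\delta\ll\eps/\sqrt{\ell}$ to get $\sum_\xi|\widehat{\q}(\xi)-\widehat{\p}(\xi)|^2\le\eps^2/16$. Your special case for small $|h_\xi|$ does not close this gap: the constraint $u_\xi^2+v_\xi^2\ge B$ bounds $|\widetilde{L}_\xi|$ from below, which says nothing about $\mathrm{Re}\,\widetilde{L}_\xi$ being negative (a large imaginary part satisfies it), so it does not force $|\widehat{\q}(\xi)|$ small. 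Moreover, once the inductive phase-unwrapping chain hits such a $\xi$, the branch of $\tau_{\xi+1}$ is no longer tethered to the Taylor-series branch of $\widetilde{L}_{\xi+1}$. The paper sidesteps all of this by exponentiating (branch-free, well-conditioned at $0$) after subtracting the explicit integer shift $o_\xi=\round{\tilde\mu\xi/M}$.

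Second, your handling of parameters near $1/2$ in the small-variance regime does not go through. You propose ensuring $|e(\pm\xi/M)-1|\le 2-c$ by taking $M$ large relative to $\ell$, but $M$ and $\ell$ are fixed with $M=\Theta(\log(1/\eps))$ and $\ell$ a \emph{sufficiently large} multiple of $\log(1/\eps)$; in the algorithm $\ell>M$, so for some $|\xi|\le\ell$ the quantity $|e(\xi/M)-1|$ is essentially $2$ and the log series for $q_i\approx 1/2$ does not converge geometrically. The paper's remedy is structural rather than parametric: when $\var[\p]\ll\log(1/\eps)$ it pulls the (at most $O(\log(1/\eps))$ many) parameters in $[1/4,3/4]$ out of the log series and multiplies their exact factors $\prod_{i\in R}(q_ie(\xi/M)+(1-q_i))^{m_i}$ in directly, keeping the remaining series over parameters with $r_i\le 1/4$ where convergence is automatic.
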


\begin{proof}
For technical reasons, we begin by considering the case that $\var[\p]$
is larger than a sufficiently large multiple of $\log(1/\epsilon)$,
as we will need to make use of slightly different techniques in the other case.
In this case, we construct our system $\cal P$ in the following manner.
We begin by putting appropriate constraints on the mean and variance of $\q$ and requiring that the $q_i$'s lie in appropriate intervals.
\begin{eqnarray}
\widetilde{\mu} - 2\widetilde{\sigma} \leq \sum_{j=1}^k m_j p_j \leq \widetilde{\mu} + 2\widetilde{\sigma} \label{eqn:mean-approx} \\
\widetilde{\sigma}^2/2 -1 \leq \sum_{j=1}^k m_j p_j (1-p_j) \leq 2 \widetilde{\sigma}^2  \label{eqn:variance-approx} \\
a_j \leq p_j \leq b_j, \label{eqn:interval}
\end{eqnarray}
Next, we need a low-degree polynomial to express the condition that Fourier coefficients of $\q$ are approximately correct.
To do this, we let $S$ denote the set of indices $i$ so that $[a_i,b_i]\subset [0,1/2]$ and $T$ the set so that $[a_i,b_i]\subset [1/2,1]$ and let $m=\sum_{i\in T}m_i$.
We let
\begin{equation}\label{gdefEqn}
g_\xi = 2\pi i \xi m/M+\sum_{k=1}^\ell \frac{(-1)^{k+1}}{k}\left((e(\xi/M)-1)^k\sum_{i\in S}m_i q_i^k+(e(-\xi/M)-1)^k \sum_{i\in T} m_i(1-q_i)^k \right)
\end{equation}
be an approximation to the logarithm of $\widehat{\q}(\xi)$. We next define $\exp'$ to be a Taylor approximation to the exponential function
$$
\exp'(z) := \sum_{k=0}^\ell \frac{z^k}{k!}.
$$
By Taylor's theorem, we have that
$$
|\exp'(z)-\exp(z)| \leq \frac{z^{\ell+1}\exp(z)}{(\ell+1)!},
$$
and in particular that if $|z| < \ell/3$ that $|\exp'(z)-\exp(z)| = \exp(-\Omega(\ell))$.

We would ideally like to use $\exp'(g_\xi)$ as an approximation to $\widehat{\q}(\xi)$.
Unfortunately, $g_\xi$ may have a large imaginary part.
To overcome this issue, we let $o_\xi$, defined as the nearest integer to $\tilde \mu \xi/M$,
be an approximation to the imaginary part, and we set
\begin{equation}\label{qdefEqn}
q_\xi = \exp'(g_\xi+2\pi i o_\xi) \;.
\end{equation}
We complete our system $\mathcal{P}$ with the final inequality:
\begin{equation}
\sum_{-\ell \leq \xi \leq \ell} |q_\xi - h_\xi|^2 \leq  \eps^2/8 \label{eqn:ft-approx}.
\end{equation}

In order for our analysis to work, we will need for $q_\xi$ to approximate $\widehat{\q}(\xi)$. Thus, we make the following claim:
\begin{claim}\label{qApproxClaim}
If Equations (\ref{eqn:mean-approx}), (\ref{eqn:variance-approx}), (\ref{eqn:interval}), (\ref{gdefEqn}), and (\ref{qdefEqn}) hold,
then $|q_\xi - \widehat{\q}(\xi)|<\epsilon^3$ for all $|\xi|\leq \ell$.
\end{claim}
This is proved in Appendix \ref{apsec:alg} by showing that $g_\xi$
is close to a branch of the logarithm of $\widehat{\q}(\xi)$ and that $|g_\xi+2\pi i o_\xi| \leq O(\log(1/\eps))$, so $\exp'$
 is a good enough approximation to the exponential.

Hence, our system $\mathcal{P}$ is defined as follows:

\noindent\textbf{Variables:}
\begin{itemize}
\item $q_i$ for each distinct parameter $i$ of $\q$.
\item $g_\xi$ for each $|\xi|\leq \ell$.
\item $q_\xi$ for each $|\xi|\leq \ell$.
\end{itemize}
\noindent\textbf{Equations:} Equations (\ref{eqn:mean-approx}), (\ref{eqn:variance-approx}), (\ref{eqn:interval}), (\ref{gdefEqn}), (\ref{qdefEqn}), and (\ref{eqn:ft-approx}).

\smallskip

To prove (i), we note that such a $\q$ will satisfy (\ref{eqn:mean-approx}) and (\ref{eqn:variance-approx}),
because of the bounds on its mean and variance,
and will satisfy Equation (\ref{eqn:interval}) by assumption.
Therefore, by Claim \ref{qApproxClaim}, $q_\xi$ is approximately $\widehat{\q}(\xi)$ for all $\xi$. On the other hand, since $\dtv(\p,\q)<\epsilon/\ell$, we have that $|\widehat{\p}(\xi)-\widehat{\q}(\xi)|<\epsilon/\ell$ for all $\xi$. Therefore, setting $g_\xi$ and $q_\xi$ as specified, Equation (\ref{eqn:ft-approx}) follows.
To prove (ii), we note that a $\q$ whose parameters satisfy $\p$ will by Claim \ref{qApproxClaim} satisfy the hypotheses of {Lemma \ref{closePBDLem}}.
Therefore, $\dtv(\p,\q)\leq \epsilon/2.$

{As we have defined it so far,
the system $\mathcal{P}$ does not have rational coefficients.
Equation (\ref{gdefEqn}) makes use of $e(\pm \xi/M)$ and $\pi$,
as does Equation (\ref{qdefEqn}). To fix this issue, we note that
if we approximate the appropriate powers of $(\pm 1 \pm e(\pm \xi/M))$ and $q\pi i$
each to accuracy $(\epsilon/\sum_{i \in S} m_i))^{10}$,
this produces an error of size at most $\epsilon^4$ in the value $g_\xi$, and therefore
an error of size at most $\epsilon^3$ for $q_\xi$, and this leaves the above argument unchanged.

Also, as defined above,
the system $\mathcal{P}$ has complex constants
and variables and many of the equations equate complex quantities.
The system can be expressed as a set of real inequalities by doubling
the number of equations and variables
to deal with the real and imaginary parts separately.
Doing so introduces binomial coefficients into the coefficients, which
are no bigger than $2^{O(\log(1/\eps))}=\poly(1/\eps)$ in magnitude.
To express $\exp'$, we need denominators with a factor of $\ell!=\log(1/\eps)^{\Theta(\log(1/\eps))}$.
All other constants can be expressed as rationals with numerator and denominator bounded by $\poly(1/\eps)$.
So, the encoding size of any of the rationals that appear in the system is $\log(\log(1/\eps)^{O(\log(1/\eps))})=O(\log^2(1/\eps))$.

}

One slightly more difficult problem is that the proof of Claim \ref{qApproxClaim} depended upon the fact that $\var[\p] \gg \log(1/\epsilon)$.
If this is not the case, we will in fact need to slightly modify our system of equations. In particular, we redefine $S$ to be the set of indices, $i$, so that $b_i\leq 1/4$ (rather than $\leq 1/2$), and let $T$ be the set of indices $i$ so that $a_i \geq 3/4$. Finally, we let $R$ be the set of indices for which $[a_i,b_i]\subset [1/4,3/4]$. We note that,
since each $i\in R$ contributes at least $m_i/8$ to $\sum_i m_i q_i(1-q_i)$, if Equations (\ref{eqn:interval}) and (\ref{eqn:variance-approx}) both hold,
we must have $|R|=O(\var[\p]) = O(\log(1/\epsilon))$.

We then slightly modify Equation (\ref{qdefEqn}), replacing it by
\begin{equation}\label{NewqdefEqn}
q_\xi = \exp'(g_\xi)\prod_{i\in R}(q_ie(\xi/M) + (1-q_i))^{m_i}.
\end{equation}
Note that by our bound on $\sum_{i\in R} m_i$, this is of degree $O(\log(1/\epsilon))$.

We now need only prove the analogue of Claim \ref{qApproxClaim} in order for the rest of our analysis to follow.
\begin{claim} \label{qApproxClaim-small}
If Equations (\ref{eqn:mean-approx}), (\ref{eqn:variance-approx}), (\ref{eqn:interval}), (\ref{gdefEqn}),  and (\ref{NewqdefEqn}) hold,
then $|q_\xi - \widehat{\q}(\xi)|<\epsilon^3$ for all $|\xi|\leq \ell$.
\end{claim}
We prove this in Appendix \ref{apsec:alg}, by proving similar bounds to those needed for Claim \ref{qApproxClaim}.
This completes the proof of our theorem in the second case.
\end{proof}

Our algorithm for properly learning PBDs is given in pseudocode below:

\vspace{0.2cm}

\fbox{\parbox{6.3in}{
{\bf Algorithm} {\tt Proper-Learn-PBD}\\
Input: sample access to a PBD $\p$ and $\eps>0$.\\
Output: A hypothesis PBD that is $\eps$-close to $\p$ with probability at least $9/10$. \\

\vspace{-0.2cm}

 Let $C$ be a sufficiently large universal constant.

\vspace{-0.2cm}

\begin{enumerate}

\item
Draw $O(1)$ samples from $\p$
and with confidence probability $19/20$ compute: (a) $\widetilde{\sigma}^2$, a factor $2$ approximation to $\var_{X \sim \p}[X]+1$,
and (b) $\widetilde{\mu}$, an approximation to $\E_{X \sim \p}[X]$ to within one standard deviation.
Set $M \eqdef {\lceil C(\log(1/\epsilon)+ \widetilde{\sigma}\sqrt{\log(1/\eps)}) \rceil}$.
Let $\ell \eqdef \lceil C^2 \log(1/\eps) \rceil$. 

\item If $\widetilde{\sigma} > \Omega(1/\eps^3),$
then we draw $O(1/\eps^2)$ samples
and use them to learn a shifted binomial distribution,
using algorithms {\tt Learn-Poisson} and {\tt Locate-Binomial} from \cite{DDS15-journal}.
Otherwise, we proceed as follows:

\item Draw $N=C^3 (1/\eps^2) \ln^{2}(1/\eps)$ samples $s_1,\ldots, s_N$ from $\p$.
For integers $\xi$ with $|\xi| \leq \ell$, set $h_\xi$ to be the empirical DFT modulo $M$.
Namely, $h_\xi := \frac{1}{N}\sum_{i=1}^N e(-\xi s_i/M).$

\item Let $\mathcal{M}$ be the set of multisets of multiplicities described in Lemma \ref{NumberofMultiplicitiesLem}.
For each element $m\in\mathcal{M},$
let $\mathcal{P}_m$ be the corresponding
system of polynomial equations as described in Theorem \ref{systemThm}.

\item For each such system, use the algorithm from {Theorem \ref{renegar}}
to find a solution to precision $\epsilon/\new{(2k)}$,
{where $k$ is the sum of the multiplicities not corresponding to $0$ or $1$},
if such a solution exists. Once such a solution is found,
return the PBD $\q$ with parameters $q_i$ to multiplicity $m_i$,
where $m_i$ are the terms from $m$ and $q_i$ in the approximate solution to $\mathcal{P}_m$.
\end{enumerate}
}}

\vspace{0.2cm}

\begin{proof}[{\bf Proof of Theorem~\ref{thm:proper-learning}}]

{
We first note that the algorithm succeeds in the case that $\var_{X \sim \p}[X] = \Omega(1/\eps^6)$:
\cite{DDS15-journal} describes procedures {\tt Learn-Poisson} and {\tt Locate-Binomial} that draw O($1/\eps^2$) samples,
and return a shifted binomial $\eps$-close to a PBD $\p$, provided $\p$ is not close to a PBD in ``sparse form'' in their terminology.
This hods for any PBD with effective support $\Omega(1/\eps^3)$, since by definition a PBD in ``sparse form'' has
support of size $O(1/\eps^3)$.}

It is clear that the sample complexity of our algorithm is $O(\eps^{-2} \log^2(1/\epsilon))$.
The runtime of the algorithm is dominated by Step~5.
We note that by Lemma \ref{NumberofMultiplicitiesLem}, $|\mathcal{M}| = (1/\epsilon)^{O(\log\log(1/\epsilon))}$. 
Furthermore, by {Theorems \ref{renegar}} and \ref{systemThm}, the runtime for solving the system $\mathcal{P}_m$ 
is $O(\log(1/\epsilon))^{O(\log(1/\epsilon))}=(1/\epsilon)^{O(\log\log(1/\epsilon))}$. Therefore, the total
runtime is $(1/\epsilon)^{O(\log\log(1/\epsilon))}$.

It remains to show correctness. We first note that each $h_\xi$ is an average of independent random variables $e(-\xi p_i/M)$, 
with expectation $\widehat{\p}(\xi)$. Therefore, by standard Chernoff bounds, with high probability 
we have that $|h_\xi-\widehat{\p}(\xi)| = O(\sqrt{\log(\ell)}/\sqrt{N}) \ll \epsilon/\sqrt{\ell}$ for all $\xi$, and therefore we have that
$$
\sum_{|\xi|\leq \ell} |h_\xi - \widehat{\p}(\xi)|^2 < \epsilon^2/8.
$$
Now, by Lemma \ref{NumberofMultiplicitiesLem}, for some $m\in\mathcal{M}$ there will exist a PBD $\q$ 
whose distinct parameters come in multiplicities given by $m$ and lie in the corresponding intervals 
so that $\dtv(\p,\q)\leq \epsilon^2$. Therefore, by Theorem \ref{systemThm}, the system $\mathcal{P}_m$ will have a solution. 
Therefore, at least one $\mathcal{P}_m$ will have a solution and our algorithm will necessarily return \emph{some} PBD $\q.$

On the other hand, any $\q$ returned by our algorithm will correspond
to an approximation of some solution of $\mathcal{P}_m$,
for some $m\in\mathcal{M}$. By Theorem \ref{systemThm}, any solution to any $\mathcal{P}_m$
will give a PBD $\q$ with $\dtv(\p,\q)\leq\epsilon/2$.
Therefore, the actual output of our algorithm is a PBD $\q'$,
whose parameters approximate those of such a $\q$ to within $\epsilon/\new{(2k)}$.
On the other hand, from this it is clear that $\dtv(\q,\q')\leq \epsilon/2$,
and therefore, $\dtv(\p,\q')\leq \epsilon$. In conclusion,
our algorithm will always return a PBD that is within $\epsilon$ total variation distance of $\p$.
\end{proof}

\section{Conclusions and Open Problems} \label{sec:concl}

In this work, we gave a nearly-sample optimal algorithm for properly learning PBDs that runs in almost polynomial time.
We also provided a structural characterization for PBDs that may be of independent interest.
The obvious open problem is to obtain a polynomial-time proper learning algorithm.
We conjecture that such an algorithm is possible, and our mildly super-polynomial runtime
may be viewed as an indication of the plausibility of this conjecture.
Currently, we do not know of a  $\poly(1/\eps)$  time algorithm even for the special case of an $n$-PBD
with $n = O(\log(1/\eps)).$

A related open question concerns obtaining faster proper algorithms for learning
more general families of discrete distributions that are amenable to similar techniques,
e.g., sums of independent integer-valued random variables~\cite{DDOST13focs, DKS15}, and Poisson multinomial distributions~\cite{DKT15, DKS15b}.
Here, we believe that progress is attainable via a generalization of our techniques.

The recently obtained cover size lower bound for PBDs~\cite{DKS15} is a bottleneck for other non-convex optimization problems as well,
e.g., the problem of computing approximate Nash equilibria in anonymous games~\cite{DaskalakisP2014}. The fastest known algorithms
for these problems proceed by enumerating over an $\eps$-cover.
Can we obtain faster algorithms in such settings, by avoiding enumeration over a cover?

\newpage

\bibliographystyle{alpha}

\bibliography{allrefs}

\appendix

\section*{Appendix}

\section{Sample Complexity Lower Bound for Parameter Estimation} \label{sec:param}

\begin{proposition} \label{prop:param}
Suppose that $n\geq 1/\epsilon$.
Any learning algorithm that takes $N$ samples from an $n$-PBD
and returns estimates of these parameters to additive error at most $\epsilon$ with probability at least $2/3$
must have $N \geq 2^{\Omega(1/\epsilon)}$.
\end{proposition}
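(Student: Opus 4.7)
The plan is a standard two-point sample-complexity lower bound. I will exhibit two $n$-PBDs $\p_1,\p_2$ whose sorted parameter vectors $\boldsymbol{p},\boldsymbol{q}\in[0,1]^n$ differ in some coordinate by more than $2\eps$, while $\dtv(\p_1,\p_2)\le 2^{-\Omega(1/\eps)}$. Any $\eps$-accurate parameter estimator with success probability $2/3$ can then be used to distinguish samples from $\p_1^{\otimes N}$ from samples from $\p_2^{\otimes N}$ with probability $\ge 2/3$, since its estimate at the differing coordinate must lie in one of two disjoint $\eps$-intervals depending on the hypothesis. By the coupling bound $\dtv(\p_1^{\otimes N},\p_2^{\otimes N})\le N\,\dtv(\p_1,\p_2)$ and Le~Cam's two-point inequality, this forces $N\ge 1/(3\,\dtv(\p_1,\p_2)) = 2^{\Omega(1/\eps)}$.

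For the TV bound, I will build $\boldsymbol{p},\boldsymbol{q}\in[0,1/4]^n$ (padded with zeros) matching the first $L=\lceil c_0/\eps\rceil$ power sums, $\sum_i p_i^\ell=\sum_i q_i^\ell$ for $\ell=1,\dots,L$, for a sufficiently small absolute constant $c_0>0$. Since all parameters lie in $[0,1/4]$, the constant $A$ in Lemma~\ref{momentMatchLem} satisfies $A\le 3$, hence $A\cdot p_{\max}\le 3/4$; the summand $A^\ell|\sum p_i^\ell-\sum q_i^\ell|$ vanishes for $\ell\le L$ and is bounded by $2n(3/4)^\ell$ for $\ell>L$. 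Applying Lemma~\ref{momentMatchLem} with target accuracy $\delta=2^{-c_1/\eps}$ for a small $c_1>0$ coordinated with $c_0$, the hypothesis $2n(3/4)^{L+1}\le \delta/(C\log(1/\delta))$ holds once $L$ is a sufficiently large multiple of $1/\eps$ (noting that we may assume $n=\poly(1/\eps)$ by the standard PBD reduction cited in Section~\ref{ssec:results}). Thus $\dtv(\p_1,\p_2)\le 2^{-\Omega(1/\eps)}$.

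To realize the moment matching together with a sorted-coordinate gap of at least $3\eps$, I use a Chebyshev-Lobatto signed-measure construction. Let $t_j=\tfrac{1}{8}(1+\cos(j\pi/(L+1)))$ for $j=0,\dots,L+1$ be the $L+2$ extrema of the shifted Chebyshev polynomial $T_{L+1}$ rescaled into $[0,1/4]$, and let $w_j>0$ be the Gauss-Lobatto-Chebyshev quadrature weights. Since $T_{L+1}(8t_j-1)=(-1)^j$ and this quadrature integrates $f(x)T_{L+1}(x)/\sqrt{1-x^2}$ exactly for $\deg f\le L$, the orthogonality of $T_{L+1}$ to polynomials of degree $\le L$ (with respect to the Chebyshev weight) yields $\sum_{j=0}^{L+1}(-1)^j w_j\, t_j^\ell = 0$ for $\ell=0,1,\dots,L$. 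Scaling by a common denominator $N_0$ to obtain positive integer multiplicities, I set $\boldsymbol{p}$ to contain each even-indexed $t_j$ with multiplicity $N_0 w_j$ and $\boldsymbol{q}$ similarly for odd $j$; both vectors have equal total nonzero count (from the $\ell=0$ equation) and agree on their first $L$ power sums (from $\ell\ge 1$). In the middle of the sorted order, $\boldsymbol{p}$'s entry is some $t_{2k}$ while $\boldsymbol{q}$'s is a neighboring $t_{2k\pm 1}$, differing by the interior Chebyshev spacing $\pi/(8L)$; taking $c_0<\pi/24$ makes this exceed $3\eps$, as required.

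The main obstacle is this tight tradeoff: the matched-moment count $L=\Omega(1/\eps)$ governs the TV bound (via $(3/4)^L$), while the sorted gap is only $\Theta(1/L)$, forcing both quantities to scale as $1/\eps$ with carefully coordinated constants $c_0$ and $c_1$. Rounding the irrational Lobatto weights to integer multiplicities introduces moment-matching errors of magnitude $O(1/N_0)$ that can be absorbed into the $(3/4)^L$ tail by taking $N_0$ a sufficiently large polynomial in $1/\eps$, so this is a technical detail rather than a genuine obstruction.
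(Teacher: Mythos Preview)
Your approach is essentially the paper's own: a Chebyshev two-point construction with all parameters in $[0,1/4]$, followed by Lemma~\ref{momentMatchLem} for the $2^{-\Omega(1/\eps)}$ total-variation bound. The paper sets $p_j=(1+\cos(2\pi j/n))/8$ and $q_j=(1+\cos((2\pi j+\pi)/n))/8$, observes these are exactly the roots of $T_n(8x-1)\mp 1$, and since those two polynomials differ only in their constant terms, Newton--Girard gives $\sum_j p_j^\ell=\sum_j q_j^\ell$ for $1\le\ell\le n-1$. Your Lobatto-node construction with $L+1=n$ produces literally the same two parameter multisets (once the weights are taken as $1,2,\ldots,2,1$) and recovers the same moment identities via quadrature orthogonality rather than Vieta, so the two routes coincide.

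Two points to clean up. First, the Gauss--Lobatto--Chebyshev weights are exactly $\pi/(2(L+1))$ at the endpoints and $\pi/(L+1)$ in the interior, so after scaling you get the integer weights $1,2,\ldots,2,1$ with no rounding whatsoever. Your claim that rounding errors are $O(1/N_0)$ is in fact false: rounding each $N_0 w_j$ to the nearest integer leaves a $\Theta(1)$ error per weight regardless of $N_0$, so the low-order power sums---and hence the mean and variance, which Lemma~\ref{momentMatchLem} requires to match \emph{exactly}---would be off by $\Theta(L)$. Fortunately this is moot once you use the true (rational) weights. Second, the tail bound $2n(3/4)^\ell$ is vacuous when $n$ is large; since only $O(L)$ of the parameters are nonzero, the correct bound is $O(L)(3/4)^\ell$, and the detour through an ``$n=\poly(1/\eps)$'' reduction is unnecessary (that reduction in Section~\ref{ssec:results} concerns density estimation, not parameter estimation).
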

\begin{proof}
We may assume that $n=\Theta(1/\epsilon)$ (as we could always make the remaining parameters all $0$)
and demonstrate a pair of PBDs whose parameters differ by $\Omega(\epsilon)$,
and yet have variation distance $2^{-\Omega(1/\epsilon)}$. Therefore, if such an algorithm is given one of these two PBDs,
it will be unable to distinguish which one it is given, and therefore unable to learn the parameters to $\epsilon$ accuracy
with at least $2^{\Omega(1/\epsilon)}$ samples.

In order to make this construction work, we take $\p$ to have parameters $p_j:=(1+\cos\left(\frac{2\pi j}{n} \right))/8$,
and let $\q$ have parameters $q_j:=(1+\cos\left(\frac{2\pi j+\pi}{n} \right))/8$.
Suppose that $j=n/4+O(1)$. We claim that none of the $q_i$ are closer to $p_j$ that $\Omega(1/n)$.
This is because for all $i$ we have that $\left(\frac{2\pi i+\pi}{n}\right)$ is at least $\Omega(1/n)$ from $\left(\frac{2\pi j}{n}\right)$ and $\left(\frac{2\pi (n-j)}{n}\right)$.

On the other hand, it is easy to see that the $p_j$ are roots of the polynomial $(T_n(8x-1)-1)$, and $q_j$ are the roots of $(T_n(8x-1)+1)$, where $T_n$ is the $n^{th}$ Chebyshev polynomial. Since these polynomials have the same leading term and identical coefficients other than their constant terms, it follows
that the elementary symmetric polynomials in $p_j$ of degree less than $n$ equal the corresponding polynomials in the $q_j$. From this, by the Newton-Girard formulae, we have that $\sum_{i=1}^n p_i^l=\sum_{i=1}^n q_i^l$ for $1 \leq l \leq n-1$. For any $l \geq n$, we have that $3^l(\sum_{i=1}^n (p_i^l-q_i^l)) \leq n(3/4)^n$,
and so by Lemma~\ref{momentMatchLem}, we have that $\dtv(\p,\q)=2^{-\Omega(n)}$. This completes our proof.
\end{proof}

\section{Omitted Proofs from Section~\ref{sec:count}}

\subsection{Proof of Lemma~\ref{NumberofMultiplicitiesLem}.} \label{ap:NumberofMultiplicitiesLem}

For completeness, we restate the lemma below.

\medskip

\noindent {\bf Lemma \ref{NumberofMultiplicitiesLem}.}
{\em For every $\p$ as in Theorem \ref{thm:complex-distinct-param},
there exists an explicit set $\mathcal{M}$ of multisets of triples $(m_i,a_i,b_i)_{1\leq i \leq k}$ so that
\begin{enumerate}
\item[(i)] For each element of $\mathcal{M}$ and each $i$, $[a_i,b_i]$ is either one of the intervals $I_i$ or $J_i$ as in Theorem \ref{thm:complex-distinct-param} or $[0,0]$ or $[1,1]$.
\item[(ii)] For each element of $\mathcal{M}$, $k=O(\log(1/\epsilon))$.
\item[(iii)] There exist an element of $\mathcal{M}$ and a PBD $\q$ as in the statement of Theorem \ref{thm:complex-distinct-param} with $\dtv(\p,\q)<\epsilon^2$ so that $\q$ has a parameter of multiplicity $m_i$ between $a_i$ and $b_i$ for each $1\leq i \leq k$ and no other parameters.
\item [(iv)] $\mathcal{M}$ has size $\left(\frac{1}{\epsilon}\right)^{O(\log\log(1/\epsilon))}$ and can be enumerated in $\poly(|\mathcal{M}|)$ time.
\end{enumerate} }

\begin{proof}[Proof of Lemma~\ref{NumberofMultiplicitiesLem} assuming Theorem \ref{thm:complex-distinct-param}]
Replacing $\epsilon$ in Theorem \ref{thm:complex-distinct-param} by $\epsilon^2$, we take $\mathcal{M}$ to be the set of all possible ways to have at most $O(\log(1/\epsilon)/\log(1/B_i))$ terms with $[a_i,b_i]$ equal to $I_i$ or $J_i$ and having the sum of the corresponding $m$'s at most $4\var[\p]/B_i$, having one term with $a_i=b_i=1$ and $m_i=\E[\p]+\poly(1/\epsilon)$, and one term with $a_i=b_i=0$ and $m_i$ such that the sum of all of the $m_i$'s equals $n$.

For this choice of $\mathcal{M}$, (i) is automatically satisfied, and (iii) follows immediately from Theorem \ref{thm:complex-distinct-param}. To see (ii), we note that the total number of term in an element of $\mathcal{M}$ is at most
$$
O(1) + \sum_{i=1}^\ell O(\log(1/\epsilon)/\log(1/B_i)) = O(1) + \sum_{i=1}^\ell O(\log(1/\epsilon) 2^{-i}) = O(\log(1/\epsilon)).
$$

To see (iv), we need a slightly more complicated counting argument.
To enumerate $\mathcal{M}$, we merely need to enumerate each integer of size
$\E[\p]+\poly(1/\epsilon)$ for the number of $1$'s,
and enumerate for each $0\leq i \leq \ell$ all possible multi-sets of $m_i$ of size
at most $O(\log(1/\epsilon)/\log(1/B_i))$ with sum at most $2\var[\p]/B_i$
to correspond to the terms with $[a_i,b_i]=I_i$,
and again for the terms with $[a_i,b_i]=J_i$.
This is clearly enumerable in $\poly(|\mathcal{M}|)$ time,
and the total number of possible multi-sets is at most
$$
\poly(1/\epsilon)\prod_{i=0}^\ell (2\var[\p]/B_i)^{O(\log(1/\epsilon)/\log(1/B_i))}.
$$
Therefore, we have that
\begin{align*}
|\mathcal{M}| & \leq \poly(1/\epsilon)\prod_{i=0}^\ell (2\var[\p]/B_i)^{O(\log(1/\epsilon)/\log(1/B_i))}\\
& = \poly(1/\epsilon) \prod_{i=0}^\ell B_i^{-O(\log_{1/B_i}(1/\epsilon))} \prod_{i=0}^\ell O(\var[\p])^{O(\log(1/\epsilon)/(2^i \log(1/B_0)))}\\
& = \poly(1/\epsilon) \prod_{i=0}^\ell \poly(1/\epsilon) O(\var[\p])^{O(\log(1/\epsilon)/\log(1/B_0))}\\
& = (1/\epsilon)^{O(\log\log(1/\epsilon))}O(\var[\p])^{O(\log(1/\epsilon)/\log(1/B_0))}\\
& = (1/\epsilon)^{O(\log\log(1/\epsilon))}.
\end{align*}
The last equality above requires some explanation. If $\var[\p]<\log^2(1/\epsilon)$, then
$$
O(\var[\p])^{O(\log(1/\epsilon)/\log(1/B_0))} \leq \log(1/\epsilon)^{O(\log(1/\epsilon))} = (1/\epsilon)^{O(\log\log(1/\epsilon))}.
$$
Otherwise, if $\var[\p]\geq \log^2(1/\epsilon)$, $\log(1/B_0) \gg \log(\var[\p])$, and thus
$$
O(\var[\p])^{O(\log(1/\epsilon)/\log(1/B_0))} \leq \poly(1/\epsilon).
$$

This completes our proof.
\end{proof}

\subsection{Proof of Lemma~\ref{closePBDLem}.} \label{ap:closePBDLem}

For completeness, we restate the lemma below.

\medskip

\noindent {\bf Lemma \ref{closePBDLem}.}
{\em Let $\p$, $\q$ be PBDs with $|\E[\p]-\E[\q]|=O(\var[\p]^{1/2})$ and $\var[\p]=\Theta(\var[\q])$.
Let $M=\Theta({\log(1/\epsilon)}+\sqrt{\var[\p] \log(1/\epsilon)})$  and $\ell = \Theta(\log(1/\eps))$ be positive integers with the implied constants sufficiently large.
If $\sum_{-\ell \le \xi \le \ell} |\widehat{\p}(\xi)-\widehat{\q}(\xi)|^2 \le \eps^2/16$,
then $\dtv(\p,\q) \leq \eps.$ }

\begin{proof}
The proof of this lemma is similar to the analysis of correctness of the non-proper learning algorithm in~\cite{DKS15}.

The basic idea of the proof is as follows. {By Bernstein's inequality}, $\p$ and $\q$ both have nearly all of their probability mass supported in the same interval of length $M$. This means that is suffices to show that the distributions $\p \pmod{M}$ and $\q\pmod{M}$ are close. By Plancherel's Theorem, it suffices to show that
the DFTs $\widehat{\p}$ and $\widehat{\q}$ are close. However, it follows by Lemma 6 of \cite{DKS15} that these DFTs are small in magnitude
outside of $-\ell \le \xi \le \ell$.

Let $m$ be the nearest integer to the expected value of $\p$. {By Bernstein's inequality}, it follows that both $\p$ and $\q$ have $1-\epsilon/10$ of their probability mass in the interval $I=[m-M/2,m+M/2)$. We note that any given probability distribution $X$ over $\Z/M\Z$ has a unique lift to a distribution taking values in $I$. We claim that $\dtv(\p,\q)\leq \epsilon/5+\dtv(\p\pmod{M},\q\pmod{M})$. This is because after throwing away the at most $\epsilon/5$ probability mass where $\p$ or $\q$ take values outside of $I$, there is a one-to-one mapping between values in $I$ taken by $\p$ or $\q$ and the values taken by $\p\pmod{M}$ or $\q\pmod{M}$. Thus, it suffices to show that $\dtv(\p\pmod{M},\q\pmod{M})\leq 4\epsilon/5$.

By Cauchy-Schwarz, we have that
$$\dtv(\p\pmod{M},\q\pmod{M}) \leq \sqrt{M}\|\p\pmod{M}-\q\pmod{M}\|_2 \;.$$
By Plancherel's Theorem, the RHS above is
\begin{equation}\label{FTSumEqu}
\sqrt{\sum_{\xi\pmod{M}}|\widehat{\p}(\xi)-\widehat{\q}(\xi)|^2}.
\end{equation}
By assumption, the sum of the above over all $|\xi|\leq \ell$ is at most $\epsilon^2/16.$ 
However, applying Lemma 6 of \cite{DKS15} with $k=2$, we find that for any $|\xi|\leq M/2$ 
that each of $|\widehat{\p}(\xi)|,|\widehat{\q}(\xi)|$ is $\exp(-\Omega(\xi^2\var[\p]/M^2))=\exp(-\Omega(\xi^2/\log(1/\epsilon)))$. 
Therefore, the sum above over $\xi$ not within $\ell$ of some multiple of $M$ is at most
\begin{eqnarray*}
\sum_{n>\ell} \exp(-\Omega(n^2/\log(1/\epsilon))) &\leq& \sum_{n>\ell} \exp(-\Omega((\ell^2+(n-\ell)\ell)/\log(1/\epsilon))) \\
&\leq& \sum_{n>\ell} \exp(-(n-\ell))\exp(-\Omega(\ell^2/\log(1/\epsilon))) \leq \epsilon^2/16
\end{eqnarray*}
assuming that the constant defining $\ell$ is large enough. Therefore, the sum in (\ref{FTSumEqu}) is at most $\epsilon^2/8.$ This completes the proof.
\end{proof}

\section{Omitted Proofs from Section~\ref{sec:alg}} \label{apsec:alg}

In this section, we prove Claims \ref{qApproxClaim} and \ref{qApproxClaim-small} which we restate here.

\medskip

\noindent {\bf Claim \ref{qApproxClaim}.}
{\em If Equations (\ref{eqn:mean-approx}), (\ref{eqn:variance-approx}), (\ref{eqn:interval}), (\ref{gdefEqn}),  and (\ref{qdefEqn}) hold,
then $|q_\xi - \widehat{\q}(\xi)|<\epsilon^3$ for all $|\xi|\leq \ell$. }

\medskip

\begin{proof}
First we begin by showing that $g_\xi$ approximates $\log(\widehat{\q}(\xi))$. By Equation (\ref{logTaylorEqn}), 
we would have equality if the sum over $k$ were extended to all positive integers. 
Therefore, the error between $g_\xi$ and $\log(\widehat{\q}(\xi))$ is equal to the sum over all $k> \ell$. 
Since $\tilde \sigma \gg \log(1/\epsilon)$, we have that $M\gg \ell$ and therefore, $|1-e(\xi/m)|$ and $|e(-\xi/M)-1|$ are both less than $1/2.$ 
Therefore, the term for a particular value of $k$ is at most $2^{-k}\left(\sum_{i\in S} m_i q_i + \sum_{i\in T}m_i (1-q_i) \right) \gg 2^{-k}\tilde \sigma.$ 
Summing over $k> \ell$, we find that
$$
|g_\xi - \log(\widehat{\q}(\xi))| < \epsilon^4.
$$

We have left to prove that $\exp'(g_\xi - 2\pi i o_\xi)$ is approximately $\exp(g_\xi)=\exp(g_\xi-2\pi i o_\xi)$. By the above, it suffices to prove that $|g_\xi - 2\pi i o_\xi| < \ell/3.$ We note that
\begin{align*}
g_\xi & = 2\pi i \xi m/M+\sum_{k=1}^\ell \frac{(-1)^{k+1}}{k}\left((e(\xi/M)-1)^k\sum_{i\in S}m_i q_i^k+(e(-\xi/M)-1)^k \sum_{i\in T} m_i(1-q_i)^k \right)\\
& = 2\pi i \xi m/M+(e(\xi/M)-1)\sum_{i\in S}m_i q_i+(e(-\xi/M)-1)\sum_{i\in T} m_i(1-q_i) + \\
& + O\left(\sum_{k=2}^\ell |\xi|^2/M^2 2^{-k}\left(\sum_{i} m_i q_i(1-q_i) \right) \right)\\
& = 2\pi i \xi m/M+2\pi i \xi/M \left(\sum_{i\in S}m_i q_i -\sum_{i\in T} m_i(1-q_i) \right)+O(|\xi|^2/M^2 \tilde \sigma^2)\\
& = 2\pi i \xi/M \sum_i m_i q_i +O(|\xi|^2/M^2 \tilde \sigma^2)\\
& = 2\pi i \xi/M \tilde \mu +O(|\xi|/M \tilde \sigma)+O(|\xi|^2/M^2 \tilde \sigma^2)\\
& = 2\pi i o_\xi + O(\log(1/\eps)).
\end{align*}
This completes the proof.
\end{proof}

\medskip

\noindent {\bf Claim \ref{qApproxClaim-small}.}
{\em If Equations (\ref{eqn:mean-approx}), (\ref{eqn:variance-approx}), (\ref{eqn:interval}), (\ref{gdefEqn}), and (\ref{NewqdefEqn}) hold,
then $|q_\xi - \widehat{\q}(\xi)|<\epsilon^3$ for all $|\xi|\leq \ell$. }

\medskip

\begin{proof}
Let $\q'$ be the PBD obtained from $\q$ upon removing all parameters corresponding to elements of $R$. We note that
$$\widehat{\q}(\xi) = \widehat{\q'}(\xi)\prod_{i\in R}(q_ie(\xi/M) + (1-q_i))^{m_i}.$$
Therefore, it suffices to prove our claim when $R=\emptyset$.

Once again it suffices to show that $g_\xi$ is within $\epsilon^4$ of $\log(\widehat{\q}(\xi))$
and that $|g_\xi|<\ell/3$. For the former claim, we again note that, by Equation (\ref{logTaylorEqn}),
we would have equality if the sum over $k$ were extended to all integers,
and therefore only need to bound the sum over all $k>\ell$.
On the other hand, we note that $q_i\leq 1/4$ for $i\in S$ and $(1-q_i)\leq 1/4$ for $i\in T$.
Therefore, the $k^{th}$ term in the sum would have absolute value at most
$$
O\left(2^{-k}\left(\sum_{i\in S} m_i q_i +\sum_{i\in T}m_i (1-q_i)\right)\right) = O(2^{-k}\tilde \sigma_i).
$$
Summing over $k>\ell$, proves the appropriate bound on the error. Furthermore, 
summing this bound over $1\leq k \leq \ell$ proves that $|g_\xi| < \ell/3$, as required. 
Combining these results with the bounds on the Taylor error for $\exp'$ completes the proof.
\end{proof}

\end{document}